\newcommand{\bs}[1]{\boldsymbol{#1}}
\newtheorem{theorem}{Theorem}
\newtheorem{condition}{Condition}
\newtheorem{definition}{Definition}
\title{Multivariate group sequential tests for global summary statistics}
\author{Abigail J. Burdon$^{1,*}$ and Thomas Jaki$^{1,2}$ \\
$^{1}$MRC Biostatistics Unit, University of Cambridge, Robinson Way, Cambridge, CB2 0SR, U.K\\
$^{2}$University of Regensburg, Bajuwarenstrasse 4, 93053 Regensburg, Germany \\
$^{*}$\texttt{Email: abigail.burdon@mrc-bsu.cam.ac.uk}}
\begin{document}
\maketitle
\begin{abstract}
We describe group sequential tests which efficiently incorporate information from multiple endpoints allowing for early stopping at pre-planned interim analyses. We formulate a testing procedure where several outcomes are examined, and interim decisions are based on a global summary statistic. An error spending approach to this problem is defined which allows for unpredictable group sizes and nuisance parameters such as the correlation between endpoints. We present and compare three methods for implementation of the testing procedure including numerical integration, the Delta approximation and Monte Carlo simulation. In our evaluation, numerical integration techniques performed best for implementation with error rate calculations accurate to five decimal places. Our proposed testing method is flexible and accommodates summary statistics derived from general, non-linear functions of endpoints informed by the statistical model. Type 1 error rates are controlled, and sample size calculations can easily be performed to satisfy power requirements.

\end{abstract}
\begin{keywords}\
Clinical trials, error spending tests, group sequential tests, multiple endpoints
\end{keywords}

\section{Introduction}
Clinical trials which address multiple research questions have gained traction in recent years due to their efficient nature and ability to improve drug development. \citep{dmitrienko2013key} show that multiplicity problems can arise from a variety of sources such as the evaluation of multiple endpoints, treatments or subgroup comparisons. As a result, the likelihood of encountering false-positive results is significantly increased and this has been highlighted as a major concern by the European Medicines Agency (EMA [\citeyear{ema2002multiple}]). \citep{d1993strategies} give an overview of statistical techniques known as multiple testing methods which play a pivotal role in maintaining the integrity and reliability of trial outcomes when multiple statistical tests are performed.

The source of multiplicity, the associated trial objectives and the available information about the joint distribution of the quantities causing the multiplicity are key indicators which can help to identify the most appropriate multiplicity adjustment. We focus on a multiplicity problem that most commonly arises in the context of multiple endpoints and we present a global testing procedure to assess the overall treatment effect across the endpoints. For a multivariate parameter $\bs{\theta}$ and a univariate summary statistic defined by $\Delta(\bs{\theta})$, global testing procedures present us with a framework for testing the null hypothesis $H_0:\Delta(\bs{\theta})=0$. It is well known that global testing procedures are more efficient than those which test endpoints individually (see~\citet{dmitrienko2009multiple}, \citet{pan2013multiple} and~\citet{tang1993design}). In the Discussion section of this paper, we show that power can be improved by 45\% when employing our global testing procedure compared to utilising a Bonferroni adjustment. Although not immediately obvious, trials which consider efficacy and toxicity outcomes, early readouts and delayed responses and surrogate outcomes alongside primary endpoints share this common source of multiplicity and could benefit from global testing methods (see \citep{conaway1996designs}, \citep{hampson2013group}, \citep{fleming1996surrogate}). Recent guidance by the Food and Drug Administration (FDA [\citeyear{guidance2017multiple}]) highlights the necessity of controlling type 1 error rates over all key endpoints and this will be a focus of this paper.

A widely used global testing procedure is Hotelling's $T^2-$test (\citet{hotelling1992generalization}) which is a matrix generalization of the two-sample $t-$test and compares the means of two independent Gaussian random samples. This method relies on the assumption that the alternative hypothesis $H_A$ is two-sided, which is not always the most desirable configuration. To overcome this restriction, \citet{o1984procedures} introduced the ordinary least squares (OLS) and generalized least squares (GLS) methods. Unfortunately, OLS and GLS methods assume equal effect sizes per dimension of the multivariate data and assume a given directional relationship between elements of $\bs{\theta}$ and the global statistic $\Delta(\bs{\theta})$. Finally, the approximate likelihood ratio test by~\citep{tang1989approximate} generalizes the O'Brien tests by restricting attention to the maximum test statistic which removes the directional relationship assumption but is still subject to the assumption of equal effect sizes. In the context of clinical trials with multiple endpoints, there is often useful information about the functional form of the global treatment effect encapsulated in the statistical model and this should drive the choice of summary statistic. The currently available methods only allow for linear combinations which may not be the most appropriate in all cases. Examples of statistical models which could result in non-linear global summary statistics include binary endpoints and random effect models. 

Group Sequential Tests (GSTs) allow potential early stopping for efficacy and futility at preplanned interim analyses. GSTs are therefore particularly desirable due to reduced realized sample sizes and shorter times to reach a decision. However, this presents another source of multiplicity which must be adjusted for and requires knowledge of the distribution of successive test statistics across analyses. A fundamental property for each of the currently available global testing methods is that a linear generalization is applied to calculate the global summary statistic. Distributional properties of the global treatment effect are then derived and the appropriate statistical inference performed. \citet{tang1989design} observe that repeated significance testing can automatically be applied for the O'Brien GLS approach since normality holds. However, for other global testing procedures, it is surprisingly rare that the resulting test statistic is normally distributed so cannot easily be used in the group sequential setting.

For practical implementation of global testing procedures, simulation based designs have gained approval by regulators in the fixed sample setting. In Section~\ref{sec:results}, we show that when multiple analyses are concerned, we observe a ``propagation of error" effect in which the type 1 and type 2 error rates and poorly estimated beyond an acceptable level for a trial with 5 analyses. To combat this, we propose to use a multivariate numerical integration method. A compelling feature of this work is the development of a method which results in exact error rate calculations but does not increase the amount of computation effort. 

We develop a testing procedure which incorporates the efficiencies of both multiple endpoints and GSTs. A novel feature of our proposed method is that summary statistics can be general non-linear functions of the parameters which also covers the linear case for completeness. Further, an advantage over competitor methods is that our approach only requires knowledge of the distribution of treatment effect estimates and does not rely on a distributional assumption about the global summary statistic itself. Instead, the distribution of successive multivariate parameter estimates across analyses is used to ensure that type 1 error rates are controlled and power is maximized. Our proposed method gives rise to flexible clinical trial designs where information from multiple endpoints can be leveraged to improve clinical drug development.

\section{Multivariate group sequential tests}

\subsection{Set-up and notation}
The aim of a clinical trial is to assess how effective a new experimental treatment performs compared to an existing standard-of-care drug or placebo. Often, we make statistical inferences based on some ``treatment effect measure" which is defined at the design stage of the trial. We shall often use alternative keywords such as ``outcome" and `` endpoint" to refer to the treatment effect measure. In many cases, this measure is straightforward to define and corresponds to a single parameter in a statistical model. However, this can be more challenging when there is a treatment effect parameter for each of two or more endpoints and these multiple treatment effect parameters  contribute to the overall effect of treatment. Let $\bs{\theta}=(\theta_1,\dots,\theta_p)^T$ be a vector of parameters in a statistical model. Throughout, we shall use the subscript notation $\theta_j$ to denote the $j^{th}$ dimension of this $(p\times 1)$ vector. Suppose that $\Delta(\bs{\theta})$ is the function which returns a scalar output summarizing the overall effect of treatment. In a group sequential trial (GST), let $K$ denote the total number of analyses and we shall make inferences at analyses $k=1,\dots,K$. At analysis $k,$ let $\hat{\bs{\theta}}^{(k)}$ be the parameter estimate for $\bs{\theta}$ in the statistical model and let $\Sigma^{(k)}$ be the variance-covariance matrix associated with $\hat{\bs{\theta}}^{(k)}$ for $k=1,\dots,K$. In some instances, $\Sigma^{(k)}$ will be considered known, however it is more realistic that the variance-covariance matrix will be estimated using the data which we denote $\hat{\Sigma}^{(k)}$. The dimensions of these objects are such that $\hat{\bs{\theta}}^{(k)}$ is a $(p\times 1)$ vector and $\Sigma^{(k)}$ and $\hat{\Sigma}^{(k)}$ are $(p\times p)$ matrices.

\subsection{Motivating example}
\label{subsec:example}
Throughout this paper, we shall consider an example clinical trial for cardiovascular disease (see \citet{kim2020randomized}). Let $\mathbf{X}_i = (X_{1i},X_{2i}), i=1,\dots,n$ be the vector of blood pressure and cholesterol responses for patient $i$ receiving the control treatment and let $\mathbf{Y}_i=(Y_{1i},Y_{2i}), i=1,\dots,n$ be the blood pressure and cholesterol responses for patient $i$ on the experimental treatment. Note here that the patients are recruited to the control and treatment arms in a 1:1 ratio. We suppose the responses are distributed according to 
$\mathbf{X}_i \sim N (\bs{\mu}_C, M/2)$ and $\mathbf{Y}_i \sim N (\bs{\mu}_T, M/2)$ where $\bs{\mu}_C$ and $\bs{\mu}_T$ are $2\times 1$ vectors and $M$ is a $2\times 2$ matrix. The parameter $\theta_1 = \mu_{C1} - \mu_{T1}$ represents the expected reduction in blood pressure attributed to treatment and $\theta_2 = \mu_{C2}-\mu_{T2}$ represents the reduction in cholesterol. \citet{sundstrom2018synergistic} review clinical trials in cardiovascular disease and conclude that the two endpoints have a multiplicative effect on cardiovascular health. Hence $\Delta(\bs{\theta})=\theta_1\theta_2$ is a suitable global summary for the treatment effect. We may however be concerned about the value of $\theta_1\theta_2$ when $\theta_1 < 0$ and $\theta_2 < 0$ so that we consider the global summary statistic given by the function
\begin{equation}
\label{eq:example}
\Delta(\bs{\theta})=
\begin{cases}
\theta_1\theta_2 &\theta_1 \geq 0 \cup \theta_2 \geq 0 \\
-\theta_1\theta_2 & \text{otherwise}
\end{cases}.
\end{equation}

We shall estimate $\bs{\theta}=(\theta_1,\theta_2)$ in the usual way for normally distributed responses. In a GST with $K=5$ analyses, at analysis $k$, let $n^{(k)}$ be the cumulative number of patients on each treatment arm in the trial. Then let
\begin{equation}
\label{eq:example_est}
\hat{\bs{\theta}}^{(k)} = \frac{1}{n^{(k)}}\sum_{i=1}^{n^{(k)}} \mathbf{X}_i - \mathbf{Y}_i \hspace{1cm} \text{ for } k=1,\dots,K.
\end{equation}
The variance-covariance matrix associated with $\hat{\bs{\theta}}^{(k)}$ is known to be $\Sigma^{(k)} = M/n^{(k)}$ and can be estimated by
$$
\hat{\Sigma}^{(k)} = \frac{1}{n^{(k)}(n^{(k)}-1)}\sum_{i=1}^{n^{(k)}} (\mathbf{X}_i-\mathbf{Y}_i-\hat{\bs{\theta}}^{(k)})(\mathbf{X}_i-\mathbf{Y}_i-\hat{\bs{\theta}}^{(k)})^T.
$$

\subsection{Multivariate group sequential test (MGST)}
\label{subsec:testing_procedure}
We aim to design and perform a hypothesis test which encapsulates information about multiple treatment effect parameters in a single framework. Suppose that $\mathcal{N} \subset \mathbb{R}^p$ denotes the space of treatment parameters which represents no difference between treatment and control groups. Figure~\ref{fig:reject_regions} shows an example of the null region which, with $p=2$, is given by $\mathcal{N}=\{\bs{\theta};\theta_1 \leq 0 ,\theta_2 \leq 0\}$. We shall test the one-sided hypothesis
$$
H_0:\bs{\theta} \in \mathcal{N} \hspace{1cm} \text{vs} \hspace{1cm} H_A: \bs{\theta} \notin \mathcal{N}.
$$
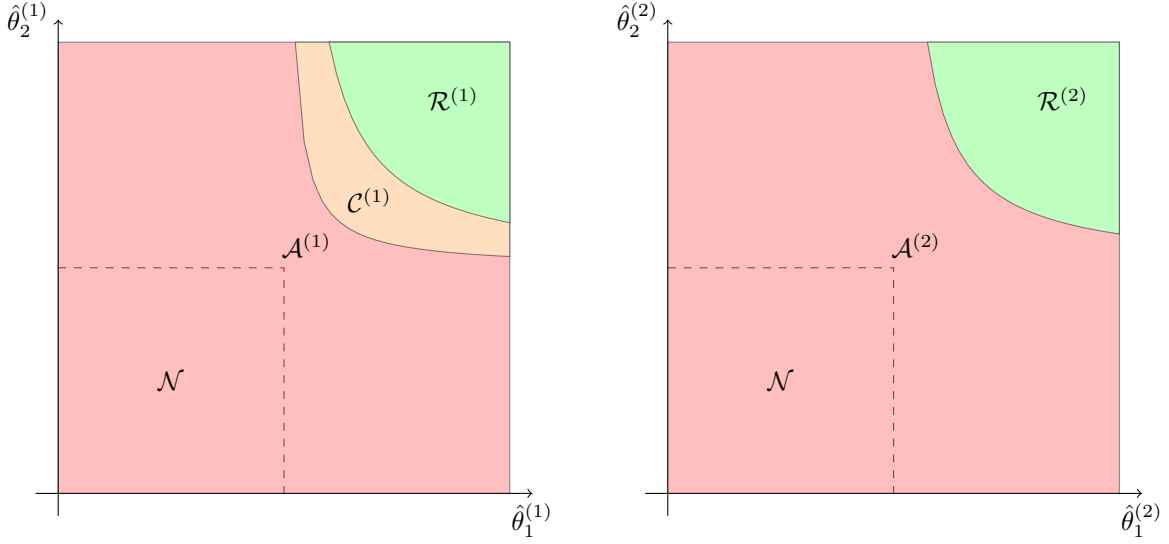
\begin{figure}[t]
\centering

\begin{tikzpicture}[scale = 0.3]

  \begin{scope}[shift={(0,0)}]
    \draw[->] (-11,-10) -- (11,-10) node[below] {$\hat{\theta}_1^{(1)}$};
    \draw[->] (-10,-11) -- (-10,11) node[left] {$\hat{\theta}_2^{(1)}$};
    \draw[dashed] (-10,0) -- (0,0) -- (0,-10);

    \pgfmathsetmacro{\ak}{5}
    \pgfmathsetmacro{\bk}{20}
    \draw[fill=red!50, opacity=0.5] (-10,-10) -- (10,-10) -- (10,10) -- (-10,10) -- cycle;
    \fill[white] plot[domain={\ak/10}:{10}] (\x, {\ak/\x}) -- (10,10) -- cycle;
    \draw[fill=orange!50, opacity=0.5] plot[domain={\ak/10}:{10}] (\x, {\ak/\x}) -- (10,10) -- cycle;
    \fill[white] plot[domain={\bk/10}:{10}] (\x, {\bk/\x}) -- (10,10) -- cycle;
    \draw[fill=green!50, opacity=0.5] plot[domain={\bk/10}:{10}] (\x, {\bk/\x}) -- (10,10) -- cycle;

    \node at (-5,-5) {$\mathcal{N}$};
    \node at (1,1) {$\mathcal{A}^{(1)}$};
    \node at (7.5,7.5) {$\mathcal{R}^{(1)}$};
    \node at (3.8,3) {$\mathcal{C}^{(1)}$};
  \end{scope}
  
  \begin{scope}[shift={(27,0)}]
    \draw[->] (-11,-10) -- (11,-10) node[below] {$\hat{\theta}_1^{(2)}$};
    \draw[->] (-10,-11) -- (-10,11) node[left] {$\hat{\theta}_2^{(2)}$};
    \draw[dashed] (-10,0) -- (0,0) -- (0,-10);
    
    \pgfmathsetmacro{\bk}{15}
    \draw[fill=red!50, opacity=0.5] (-10,-10) -- (10,-10) -- (10,10) -- (-10,10) -- cycle;
    \fill[white] plot[domain={\bk/10}:{10}] (\x, {\bk/\x}) -- (10,10) -- cycle;
    \draw[fill=green!50, opacity=0.5] plot[domain={\bk/10}:{10}] (\x, {\bk/\x}) -- (10,10) -- cycle;
    
    \node at (-5,-5) {$\mathcal{N}$};
    \node at (1,1) {$\mathcal{A}^{(2)}$};
    \node at (7.5,7.5) {$\mathcal{R}^{(2)}$};
  \end{scope}
\end{tikzpicture}

\caption{Null region $\mathcal{N}$ and decision regions $\mathcal{A}^{(k)}, \mathcal{R}^{(k)}$ and $\mathcal{C}^{(k)}$ for $k=1,2$ in a group sequential trial with $K=2$ analyses.}
\label{fig:reject_regions}
\end{figure}

In order to carry out this hypothesis test, we present a set of rules which determine when $H_0$ should be rejected or accepted. The underlying concept for this work is that the value of $\Delta(\bs{\theta})$
determines the magnitude of the global treatment effect across all endpoints and pathways. The function $\Delta(\bs{\theta})$ is a naturally arising feature of the design problem and we would like to exploit this useful information. Therefore, we shall test the hypotheses by finding parameter estimates, $\hat{\bs{\theta}}^{(k)},$ and global statistics $\Delta(\hat{\bs{\theta}}^{(k)})$ for $k=1,\dots,K.$ The proposed method is hereto referred to as the ``multivariate group sequential test" (MGST) and is now described. At analysis $k,$ let $\mathcal{A}^{(k)} \subset \mathbb{R}^{p}$ be the acception region, $\mathcal{R}^{(k)} \subset \mathbb{R}^p$ be the rejection region and $\mathcal{C}^{(k)} \subset \mathbb{R}^p$ be the continuation region which are defined by
\begin{align*}
\mathcal{A}^{(k)} &= \{\hat{\bs{\theta}}^{(k)} : \Delta(\hat{\bs{\theta}}^{(k)}) <  a^{(k)}\}\\
\mathcal{R}^{(k)} &= \{\hat{\bs{\theta}}^{(k)} : \Delta(\hat{\bs{\theta}}^{(k)}) \geq b^{(k)}\} \\
\mathcal{C}^{(k)} &= \{\hat{\bs{\theta}}^{(k)} : a^{(k)} \leq \Delta(\hat{\bs{\theta}}^{(k)}) < b^{(k)}\}
\end{align*}
for some scalars $a^{(k)}\leq b^{(k)}$ for $k=1,\dots,K.$ We note that for each $k,$ the regions $\mathcal{A}^{(k)}, \mathcal{R}^{(k)}$ and $\mathcal{C}^{(k)}$ are disjoint and their union is equal to the state space $\mathbb{R}^p.$

For the example function $\Delta(\bs{\theta})$ in Equation~\eqref{eq:example} and a total of $K=2$ analyses, Figure~\ref{fig:reject_regions} shows a visual representation of these decision regions. Whenever endpoints are tested separately while accounting for correlation, this is a specific case of our flexible testing procedure with rectangular decision regions. This has been proposed and implemented by~\citet{jennison1993group} for the trade-off between response and toxicity in the co-primary endpoint setting. \citet{conaway1996designs} go one step further than this, finding quadrilateral regions and allude to the idea that even more complex decision regions may be desirable.

In parallel to the univariate group sequential test described by~\citet{jennison2000group}, the multivariate group sequential testing procedure is as follows:

\hspace{1cm} After analysis $k=1,\dots,K-1$

\hspace{2cm} if $\hat{\bs{\theta}}^{(k)} \in \mathcal{A}^{(k)}$ \hspace{1cm} stop, accept $H_0$

\hspace{2cm} if $\hat{\bs{\theta}}^{(k)} \in \mathcal{R}^{(k)}$ \hspace{1cm} stop, reject $H_0$

\hspace{2cm} otherwise \hspace{1.5cm} continue to analysis $k+1$

\hspace{1cm} after analysis $K$

\hspace{2cm} if $\hat{\bs{\theta}}^{(K)} \in \mathcal{A}^{(K)}$ \hspace{1cm} stop, accept $H_0$

\hspace{2cm} if $\hat{\bs{\theta}}^{(K)} \in \mathcal{R}^{(K)}$ \hspace{1cm} stop, reject $H_0$

\noindent where $a^{(K)}=b^{(K)}$. This restriction implies that $\mathcal{C}^{(K)}=\emptyset$ and this is to ensure that the trial terminates at the final analysis. With the testing procedure in place, the aim is then to determine the scalar constants $a^{(k)}$ and $b^{(k)}$ for $k=1,\dots,K$ to satisfy power and type 1 error requirements.

\subsection{Multivariate canonical joint distribution (MCJD)}
For subsequent results to hold, we require the parameter estimates to satisfy the following distribution which we define as the ``multivariate canonical joint distribution" (MCJD). 
\begin{definition}
\label{def:canonical}
Suppose a group sequential test with $K$ analyses yields parameter estimates $\hat{\bs{\theta}}^{(1)},\dots,\hat{\bs{\theta}}^{(K)}$, then the multivariate canonical joint distribution holds if
\begin{enumerate}
\item $(\hat{\bs{\theta}}^{(1)},\dots,\hat{\bs{\theta}}^{(K)})$ is multivariate normal
\item $\hat{\bs{\theta}}^{(k)} \sim N_p(\bs{\theta}, \Sigma^{(k)}), \; 1\leq k\leq K$
\item $Cov(\hat{\bs{\theta}}^{(k_1)},\hat{\bs{\theta}}^{(k_2)}) = \Sigma^{(k_2)}, \; 1\leq k_1 \leq k_2 \leq K$.
\end{enumerate}
\end{definition}
Many estimators, for an array of datatypes, have the MCJD. \citep{jennison1997group} prove that the MCJD holds when the analysis is based on the maximum likelihood estimator, the maximum partial likelihood estimator for survival data and the parameter estimates from a normal linear model. This final case covers our sequence of estimates $\hat{\bs{\theta}}^{(1)},\dots,\hat{\bs{\theta}}^{(K)}$ in Equation~\eqref{eq:example_est}. The authors' results are commonly applied in the univariate setting because knowledge that the univariate canonical joint distribution (UCJD) holds makes light work of calculating boundary constants. Although univariate methods are more frequently used, there is no additional work required to prove that the MCJD holds for each of these broad-reaching datatypes. In Section~\ref{subsec:delta_method}, we prove that the UCJD holds approximately for the sequence of global statistics $\Delta(\hat{\bs{\theta}}^{(1)}),\dots,\Delta(\hat{\bs{\theta}}^{(K)})$. Therefore, the library of standard group sequential designs, (see \citet{pocock1977group}, \citet{o1979multiple} and \citet{gordon1983discrete}) can be directly applied.

Condition 3 of Definition~\ref{def:canonical} is known as the ``Markov property" or the ``independent increments property". The intuition here is that the distribution of the estimate at analysis $k$ depends on the previous analysis only and no earlier analyses. We can see this by noting that the distribution of $\hat{\bs{\theta}}^{(k)}$ given $\hat{\bs{\theta}}^{(k-1)}$ is such that 
\begin{equation}
\label{eq:cond2}
(\hat{\bs{\theta}}^{(k)}|\hat{\bs{\theta}}^{(k-1)}=\mathbf{x})\sim N_p(\bar{\bs{\theta}}^{(k)},\bar{\Sigma}^{(k)})
\end{equation}
where
\begin{align*}
\bar{\bs{\theta}}^{(k)}  &= \bs{\theta}+ \Sigma^{(k)}(\Sigma^{(k-1)})^{-1}(\mathbf{x}-\bs{\theta}) \\
\bar{\Sigma}^{(k)} &= \Sigma^{(k)} - \Sigma^{(k)}(\Sigma^{(k-1)})^{-1}\Sigma^{(k)}.
\end{align*}

\subsection{Error rates and boundary constants for the MGST}
\label{subsec:errors}
In what follows, we describe a process to evaluate the exact probability of the parameter estimates $\hat{\bs{\theta}}^{(1)},\dots,\hat{\bs{\theta}}^{(K)}$ being within a certain region whose boundaries are defined by the function $\Delta(\hat{\bs{\theta}}^{(k)})$. Then we are able to create multivariate group sequential tests and define stopping rules by finding boundary constants to control error rates. We begin by introducing the multivariate version of the recursive relation by~\citet{armitage1969repeated} to find sub-densities $g^{(k)}(\hat{\bs{\theta}}^{(k)};\bs{\theta}).$ This recursive relationship takes conditional distributions in a repetitive manner and makes use of the Markov property of the MCJD of Definition~\ref{def:canonical}. Denote $f(\hat{\bs{\theta}}^{(k)}|\hat{\bs{\theta}}^{(k-1)};\bs{\theta})$ as the conditional distribution of $\hat{\bs{\theta}}^{(k)}$ given $\hat{\bs{\theta}}^{(k-1)}$ in Equation~\eqref{eq:cond2}, then the subdensities are defined as
\begin{equation}
\label{eq:subdensities}
g^{(k)}(\hat{\bs{\theta}}^{(k)};\bs{\theta}) =
\begin{dcases}
f(\hat{\bs{\theta}}^{(1)};\bs{\theta}) & \text{ if } k=1 \\
\int_{\mathcal{C}^{(k-1)}}
g^{(k-1)}(\hat{\bs{\theta}}^{(k-1)};\bs{\theta})
f(\hat{\bs{\theta}}^{(k)}|\hat{\bs{\theta}}^{(k-1)};\bs{\theta})
d\hat{\bs{\theta}}^{(k-1)} & \text{ if } k=2,\dots,K.
\end{dcases}
\end{equation}
Now suppose that $\bs{\theta}_0\in \mathcal{N}$ is a vector of parameters in the null space which represent that the treatment does not work overall and that $\bs{\theta}_A\notin \mathcal{N}$ is vector of parameters representing an efficacious treatment. Let $\psi^{(k)}$ be the probability of stopping for efficacy and $\xi^{(k)}$ the probability of stopping for futility at analysis $k.$ These values are defined by
\begin{align}
\label{eq:prob_efficacy}
\psi^{(k)} &= \mathbb{P}_{\theta_0}\{\hat{\bs{\theta}}^{(1)} \in \mathcal{C}^{(1)},\dots,\hat{\bs{\theta}}^{(k-1)}\in\mathcal{C}^{(k-1)},\hat{\bs{\theta}}^{(k)}\in \mathcal{R}^{(k)}\} =
\int_{\mathcal{R}^{(k)}} g^{(k)}(\hat{\bs{\theta}}^{(k)};\bs{\theta}_0)
d\hat{\bs{\theta}}^{(k)} \\
\label{eq:prob_futility}
\xi^{(k)} &= \mathbb{P}_{\theta_A}\{\hat{\bs{\theta}}^{(1)} \in \mathcal{C}^{(1)},\dots,\hat{\bs{\theta}}^{(k-1)}\in\mathcal{C}^{(k-1)},\hat{\bs{\theta}}^{(k)}\in \mathcal{A}^{(k)}\} =
\int_{\mathcal{A}^{(k)}} g^{(k)}(\hat{\bs{\theta}}^{(k)};\bs{\theta}_A)
d\hat{\bs{\theta}}^{(k)}
\end{align}
The values of $\psi^{(k)}$ and $\xi^{(k)}$ for $k=1,\dots,K$ are prespecified at the design stage of the trial, which means that decision regions and therefore boundary constants $a^{(1)},\dots,a^{(K)}$ and $b^{(1)},\dots,b^{(K)}$, can be calculated to satisfy Equations~\eqref{eq:prob_efficacy} and~\eqref{eq:prob_futility}. There are a number of options for the configuration of $\psi^{(k)}$ and $\xi^{(k)}$. For example, the power family boundaries by~\citep{pampallona1994group} are a popular choice which result in more conservative decision making at early interim analyses. Many options suffer from reliance on an assumption of known variance of the test statistic at the design stage. This problem is known to be heightened for summary statistics such as the partial score function for survival data where variance is a function of the observed data. To combat this, the error-spending approach by~\citet{gordon1983discrete} was proposed as a flexible way of monitoring GSTs as it allows for using variance estimates to determine the configuration of the stage-wise error rates. Let $\mathcal{I}^{(1)},\dots,\mathcal{I}^{(K)}$ be information levels at analyses $1,\dots,K$ and $\mathcal{I}_{max}$ the anticipated maximum information at the end of the study, the choice for which is an essential feature of this design and will be discussed in Section~\ref{sec:sample_size}. The method partitions error rates according to
\begin{equation}
\label{eq:error_rates}
\begin{split}
\psi^{(1)}&=\pi_1(\mathcal{I}^{(1)}/\mathcal{I}_{max}), \hspace{1cm}
\psi^{(k)}=\pi_1(\mathcal{I}^{(k)}/\mathcal{I}_{max}) -\pi_1(\mathcal{I}^{(k-1)}/\mathcal{I}_{max})\text{ for }k=2,\dots,K\\
\xi^{(1)}&=\pi_2(\mathcal{I}^{(1)}/\mathcal{I}_{max}),\hspace{1cm}
\xi^{(k)}=\pi_2(\mathcal{I}^{(k)}/\mathcal{I}_{max}) -\pi_2(\mathcal{I}^{(k-1)}/\mathcal{I}_{max})\text{ for }k=2,\dots,K
\end{split}
\end{equation}
where $\pi_1(t)$ and $\pi_2(t)$ are non-decreasing functions and satisfy $\pi_1(0)=\pi_2(0)=0$ and $\pi_1(t)=\alpha$ for $t \geq 1$ and $\pi_2(t)=\beta$ for $t\geq 1.$ Throughout this paper, our examples will be under design considerations given by $\alpha=0.025,\beta=0.1$, $\pi_1(t)=\max\{\alpha t^2,\alpha\}$ and $\pi_2(t)=\max\{\beta t^2,\beta\}$. With $K=5$ and equally spaced information levels, this results in planned stage-wise type 1 error rates $\psi^{(1)},\dots,\psi^{(5)}=0.001,0.003,0.005,0.007,0.009$ and stage-wise type 2 error rates $\xi^{(1)},\dots,\xi^{(K)}=0.004,0.012,0.02,0.028,0.036.$

We now present a multivariate version of the error spending approach. As previously described, let $\hat{\Sigma}^{(k)}$ be an estimate of the $(p\times p)$ variance-covariance matrix $\Sigma^{(k)}$ at analysis $k$. A novel feature of this work is that we define information levels for global summary statistics by
\begin{equation}
\label{eq:information}
\mathcal{I}^{(k)}=|\hat{\Sigma}^{(k)}|^{-\frac{1}{2}}
\hspace{1cm}\text{for }k=1,\dots,K.
\end{equation}
This structure, which uses the determinant, is suitable because it respects proportionality to sample size and does not discriminate between positive and negative correlation structures i.e. a $2\times 2$ variance-covariance matrix with correlation $\rho=1$ has equal information to the corresponding matrix with $\rho=-1$.

When multiple treatments are concerned, it is necessary to ensure that type 1 error rates are controlled for all elements in the null space, $\mathcal{N}$, not only at a point value. Therefore, we emulate a feature of the approximate likelihood ratio test in that we choose $\bs{\theta}_0$ to be the point estimate of the treatment parameter which maximises the type 1 error rate and hence, there is no assumption needed regarding the direction of the alternative hypothesis. Formally, if $\sum_{k=1}^K\psi^{(k)}$ is the overall type 1 error rate under some value $\bs{\theta}\in\mathcal{N},$ then choose
$$\bs{\theta}_0=\text{arg}\,\max\limits_{\bs{\theta}\in \mathcal{N}}\left\{\sum_{k=1}^K\psi^{(k)}\right\}.$$
This choice is often visually obvious and/or simple to compute. In Figure~\ref{fig:reject_regions} we see that, out of all elements in $\mathcal{N}$, the point $\bs{\theta}_0=(0,0)$ has the shortest distance to $\mathcal{R}^{(1)}$ and naturally maximizes the type 1 error rate. The concept of choosing $\bs{\theta}_0$ is analogous to strong control of the family wise error rate in multiple testing procedures discussed by~\citep{proschan2020primer}.
\section{Methods for calculating probabilities and constructing boundary constants of multivariate tests}
\label{sec:methods}
\subsection{Multivariate Simpson's Rule}
\label{subsec:numerical_integration}

We have so far described a method to determine the exact probability of a sequence of parameter estimates lying with a sequence of decision regions. Unfortunately, the integral in question has no closed form analytic solution and methods to approximate this integral must therefore be employed. We consider three approximation methods including; numerical integration, the Delta method and Monte Carlo simulation. The numerical integration method introduces some novel ideas which are specific to the multivariate regions methodology and hence, we shall describe this method in detail. For the Delta and Monte Carlo methods, standard techniques are used and we refer the reader to the supplementary materials for details on the implementation. Each method is accompanied by a different source of error and we shall investigate the accuracy of each method in Section~\ref{sec:results}.

For the first method, we shall iterate Simpson's numerical integration rule over the dimensions of the vector $\hat{\bs{\theta}}^{(k)}$. Simpson's rule can be used to approximate the univariate integral of a general function $h(\hat{\theta})$ over the region $[l,u]$ by
\begin{equation}
\label{eq:simspon}
\int_l^u h(\hat{\theta})d\hat{\theta} \approx \sum_{i=1}^m w(i)h(x(i)),
\end{equation}
where $w(i)$ are weights and $x(i)$ are nodes for $i=1,\dots,m$. For our case, we shall see that the function $h(\theta)$ will be replaced by subdensities from Equation~\eqref{eq:subdensities} and the limits of integration $[l,u]$ will be explicitly defined based on the decision region $\mathcal{A}^{(k)}\mathcal{R}^{(k)}$ or $\mathcal{C}^{(k)}$, a novel feature of this work.

We describe the process of calculating the limits of integration in terms of the continuation region $\mathcal{C}^{(k)}= \{\hat{\bs{\theta}}^{(k)} : a^{(k)} \leq \Delta(\hat{\bs{\theta}}^{(k)}) < b^{(k)}\}$ and then the methodology for $\mathcal{A}^{(k)}$ and $\mathcal{R}^{(k)}$ follows analogously. The fact that $\Delta(\hat{\bs{\theta}}^{(k)})$ is a scalar means that, without loss of generality, the calculation will focus on dimension $p$. Let $\Delta^{-1}(\hat{\theta}_p^{(k)}|\hat{\theta}_1^{(k)},\dots,\hat{\theta}_{p-1}^{(k)})$ be the inverse function of $\Delta(\hat{\bs{\theta}}^{(k)})$ given elements $\hat{\theta}_1^{(k)},\dots,\hat{\theta}_{p-1}^{(k)}$ such that if $\Delta(\hat{\bs{\theta}}^{(k)})=z$ then $\Delta^{-1}(z|\hat{\theta}_1^{(k)},\dots,\hat{\theta}_{p-1}^{(k)})=\hat{\theta}_p^{(k)}.$ The following condition is needed for practical application, but not necessary for theory to hold.
\begin{condition}
\label{cond:inv_length}
The length of the vector $\Delta^{-1}(\hat{\theta}_p^{(k)}|\hat{\theta}_1^{(k)},\dots,\hat{\theta}_{p-1}^{(k)})$ is known for all $\hat{\bs{\theta}}^{(k)}\in \mathbb{R}^p$.
\end{condition}

It turns out to be surprisingly difficult to construct test statistics for which Condition~\ref{cond:inv_length} does not hold. Consider the test statistic given by $\Delta(\bs{\theta})=\sin(\theta_1\theta_2).$ The inverse function is therefore $\Delta^{-1}(\theta_2;\theta_1)=\sin^{-1}(\theta_2)/\theta_1$ which has an infinite number of values. Although it is unlikely that a test statistic of this nature would ever be considered, we therefore advise against using cyclic functions in the construction of the test statistic. When the analytic solution for $\Delta^{-1}(\hat{\theta}_p^{(k)}|\hat{\theta}_1^{(k)},\dots,\hat{\theta}_{p-1}^{(k)})$ does not exist, a search algorithm can be performed. In such a case, Condition~\ref{cond:inv_length} is needed to know how many times to perform the search to result in a vector of the desired length. In general, $\Delta(\hat{\bs{\theta}}^{(k)})$ need not be surjective or increasing which means that $\Delta^{-1}(z|\hat{\theta}_1^{(k)},\dots,\hat{\theta}_{p-1}^{(k)})$ has length zero or greater and we do not need to know the gradient of $\Delta(\hat{\bs{\theta}}^{(k)})$ at any point. Hence, this method is flexible and suitable for use in a wide range of trials and scenarios. 

We shall split the real line into segments defined by $\mathcal{C}^{(k)}.$ Suppose that
$$
\mathbf{y}=
(-\infty,
\Delta^{-1}(a^{(k)}|\hat{\theta}_1^{(k)},\dots,\hat{\theta}_{p-1}^{(k)})^T,
\Delta^{-1}(b^{(k)}|\hat{\theta}_1^{(k)},\dots,\hat{\theta}_{p-1}^{(k)})^T,
\infty)^T
$$
is a vector of length $q$. Then let
$
y_{(1)} \leq y_{(2)}\leq \dots \leq y_{(q)}
$
be the ordered elements of $\mathbf{y}$. By the ordering property, it can clearly be seen that the regions $[y_{(1)},y_{(2)}], \dots, [y_{(q-1)},y_{(q)}]$ are disjoint and their union is equal to $(-\infty,\infty)$. All elements $\hat{\bs{\theta}}^{(k)}$ such that $\hat{\theta}_p^{(k)} \in [y_{(s)},y_{(s+1)}],$ will have the same membership to the region $\mathcal{C}^{(k)}$. Therefore, denote the vector concatenating $\hat{\theta}_1,\dots,\hat{\theta}_{p-1}$ and the midpoint of $y_{(s)}$ and $y_{(s+1)}$ by
$$
M_s = (\hat{\theta}_1, \dots, \hat{\theta}_{p-1},\tfrac{1}{2}[y_{(s)}+y_{(s+1)}])^T,
$$
then for any general function $h(\hat{\bs{\theta}}^{(k)})$ integration over the region $\mathcal{C}^{(k)}$ is equal to
\begin{equation}
\label{eq:midpoint}
\int_{\mathcal{C}^{(k)}}h(\hat{\bs{\theta}}^{(k)})d\hat{\bs{\theta}}^{(k)}
=
\int_{-\infty}^{\infty}\dots\int_{-\infty}^{\infty}
\sum_{s=1}^{q-1} 
\mathbbm{1}_{\mathcal{C}^{(k)}}(\hat{\bs{\theta}}^{(k)}) 
\int_{y_{(s)}}^{y_{(s+1)}}
h(\hat{\bs{\theta}}^{(k)})
d\hat{\theta}_p^{(k)}\dots d\hat{\theta}_1^{(k)}
\end{equation}
where $\mathbbm{1}_{\mathcal{C}^{(k)}}(\hat{\bs{\theta}}^{(k)})$ denotes the indicator function that $\hat{\bs{\theta}}^{(k)}$ lies within $\mathcal{C}^{(k)}.$ The fact that the midpoint function can be taken outside of the integral means that the calculation only needs to be performed once for each element in the sum. This substantially speeds up calculation of the integral especially if evaluation of $\Delta(\hat{\bs{\theta}})$ is computationally expensive.

Combining the ideas of Equations~\eqref{eq:simspon} and~\eqref{eq:midpoint}, the subdensities in Equation~\eqref{eq:subdensities} can be approximated by
$$
\tilde{g}^{(1)}(x(i_1^{(1)}),\dots,x(i_p^{(1)});\bs{\theta})
=
f(x(i_1^{(1)}),\dots,x(i_p^{(1)}));\bs{\theta})
$$
and
\begin{equation*}
\begin{split}
\tilde{g}^{(k)}(x(i_1^{(k)}),\dots,x(i_p^{(k)});\bs{\theta})
=
\sum_{i_1^{(k-1)}=1}^{m_1^{(k-1)}} \dots
\sum_{i_{p-1}^{(k-1)}=1}^{m_{p-1}^{(k-1)}}
\sum_{s=1}^{q-1} 
\mathbbm{1}_{\mathcal{C}^{(k)}}(M_s) 
\sum_{i_s^{(k-1)}} w(i_1^{(k-1)})\dots w(i_{p-1}^{(k-1)}) \\
w(i_{s}^{(k-1)})\tilde{g}^{(k-1)}(x(i_1^{(k-1)}),\dots,x(i_{p-1}^{(k-1)});\bs{\theta}) 
f(x(i_1^{(k)}),\dots,x(i_{p-1}^{(k)})|x(i_1^{(k-1)}),\dots,x(i_{p-1}^{(k-1)});\bs{\theta})
\end{split}
\end{equation*}
for $k=2,\dots,K$. The choice of gridpoints $x(i_j^{(k)})$ and weights $w(i_j^{(k)})$ for $j=1,\dots,p$ and $k=1,\dots,K$ is discussed in the supplementary materials and uses standard results for Simpson's integration rule in one dimension. An important feature is that the number of weights and grid points is determined by a parameter $r$ and the choice for this value will be investigated in Section~\ref{sec:results}. A similar method is used to approximate Equations~\eqref{eq:prob_efficacy} and~\eqref{eq:prob_futility}. The difference here is that limits of integration are defined by regions $\mathcal{A}^{(k)}$ and $\mathcal{R}^{(k)}$ to replace $\mathcal{C}^{(k)}.$ This can be accomplished by replacing $(a^{(k)},b^{(k)})$ in the definitions of $\mathbf{y},q$ and $M_s$ with $(-\infty, a^{(k)})$ for $\mathcal{A}^{(k)}$ and with $(b^{(k)},\infty)$ for $\mathcal{R}^{(k)}$.  We note that $q\geq 2$ and when Simpson's rule is applied, the locations of the nodes are given finite values, so it is not a problem to calculate the midpoint of a line segment with non-finite bound(s). The stage-wise type 1 and type 2 error rates can therefore be approximated by
\begin{align*}
\psi^{(k)}
\approx
\sum_{i_1^{(k)}=1}^{m_1^{(k)}} \dots
\sum_{i_{p-1}^{(k)}=1}^{m_{p-1}^{(k)}}
\sum_{s=1}^{q-1} 
\mathbbm{1}_{\mathcal{R}^{(k)}}(M_s) 
\sum_{i_s^{(k-1)}} w(i_1^{(k)})\dots w(i_{p-1}^{(k)}) w(i_{s}^{(k)})
\tilde{g}^{(k)}(x(i_1^{(k)}),\dots,x(i_p^{(k)});\bs{\theta}_0) \\
\xi^{(k)}
\approx
\sum_{i_1^{(k)}=1}^{m_1^{(k)}} \dots
\sum_{i_{p-1}^{(k)}=1}^{m_{p-1}^{(k)}}
\sum_{s=1}^{q-1} 
\mathbbm{1}_{\mathcal{A}^{(k)}}(M_s) 
\sum_{i_s^{(k-1)}} w(i_1^{(k)})\dots w(i_{p-1}^{(k)}) w(i_{s}^{(k)})
\tilde{g}^{(k)}(x(i_1^{(k)}),\dots,x(i_p^{(k)});\bs{\theta}_A)
\end{align*}

With the methodology in place, we have presented a multidimensional numerical integration rule to evaluate the probabilities in Equations~\eqref{eq:prob_efficacy} and~\eqref{eq:prob_futility}. At the design stage, $\psi^{(k)}$ and $\xi^{(k)}$ are pre-specified for $k=1,\dots,K$ and the aim is then to find the boundary constants $a^{(k)}$ and $b^{(k)}$ to satisfy these constraints. This can be achieved by employing a root-finding algorithm. There are three observations to make here; firstly, in the same way as the usual group sequential test for univariate statistics, the nodes, weights and sub-densities can be stored at analysis $k$. Secondly, the unbounded gridpoints along with their weights and sub-densities do not change with $a^{(k)}$ and $b^{(k)}$, only the values at the edges must be recalculated when implementing the root-finding. Finally, efficiencies may be gained by considering the order of the dimensions at the onset. For example, it is easier to find the boundary constants for the function $\Delta(\theta_1,\theta_2)=\theta_1^2\theta_2$ compared with $\Delta(\theta_1,\theta_2)=\theta_1\theta_2^2.$

\subsection{Delta approximation}
\label{subsec:delta_method}
We can estimate probability functions and therefore boundary constants by utilizing another approximation at the distributional level. Using the Delta method by~\citep{doob1935limiting}, we can show approximately that the following distribution holds.
\begin{theorem}
\label{theorem:delta}
Let $\bs{\theta}$ be a $p\times 1$  vector of parameters in a statistical model. Suppose that $\hat{\bs{\theta}}^{(k)}$ is the parameter estimate for $\bs{\theta}$ found at analysis $k$ of a group sequential trial with $K$ analyses and that $\Sigma^{(k)}$ is the variance-covariance matrix for $\hat{\bs{\theta}}^{(k)}$. Let $\Delta(\bs{\theta})$ be a function which returns a scalar output and suppose that the sequence $\hat{\bs{\theta}}^{(1)}, \dots, \hat{\bs{\theta}}^{(K)}$ has the MCJD. Then the sequence of estimates $\Delta(\hat{\bs{\theta}}^{(1)}), \dots, \Delta(\hat{\bs{\theta}}^{(K)})$ are such that
\begin{enumerate}
\item $\Delta(\hat{\bs{\theta}}^{(1)}), \dots, \Delta(\hat{\bs{\theta}}^{(K)})$ is multivariate normally distributed
\item $\Delta(\hat{\bs{\theta}}^{(k)}) \sim 
N\left( \Delta(\bs{\theta}), \left[\frac{\partial\Delta(\bs{\theta})}{\partial \bs{\theta}}\right]^T \Sigma^{(k)} \left[\frac{\partial\Delta(\bs{\theta})}{\partial \bs{\theta}}\right]\right)$ for $1\leq k\leq K$
\item $Cov\left(\Delta(\hat{\bs{\theta}}^{(k_1)}),\Delta(\hat{\bs{\theta}}^{(k_2)})\right) = \left[\frac{\partial\Delta(\bs{\theta})}{\partial \bs{\theta}}\right]^T\Sigma^{(k_2)}\left[\frac{\partial\Delta(\bs{\theta})}{\partial \bs{\theta}}\right]$ for $k=1\leq k_1 \leq k_2 \leq K.$
\end{enumerate}
\end{theorem}

\begin{proof}
See the supplementary materials.
\end{proof}
Theorem~\ref{theorem:delta} is equivalent to saying that the sequence of estimates $\Delta(\hat{\bs{\theta}}^{(1)}),\dots, \Delta(\hat{\bs{\theta}}^{(K)})$ has approximately the UCJD. The subdensities and probabilities in Equations~\eqref{eq:prob_efficacy} and~\eqref{eq:prob_futility} can be approximated using this distribution and the resulting integrals can be evaluated using a univariate version of the numerical integration of Section~\ref{subsec:numerical_integration}. Each sub-density takes a scalar input value and this becomes equivalent to the methods in Chapter 19 by~\citet{jennison2000group}. Due to the scalar nature of the output $\Delta(\hat{\bs{\theta}}^{(k)})$ and the Markov property of the UCJD, integration will be at most in two dimensions which is much more efficient than the $2p$ dimensional integration of Section~\ref{subsec:numerical_integration}. In a similar manner to Section~\ref{subsec:numerical_integration}, the boundary constants $a^{(1)},\dots,a^{(K)}$ and $b^{(1)},\dots,b^{(K)}$ are calculated using a root finding algorithm which repeatedly evaluates Equations~\eqref{eq:prob_efficacy} and~\eqref{eq:prob_futility} under the predefined significance and power requirements.

The Delta method is based on the first order Taylor series expansion of the function $\Delta(\hat{\bs{\theta}}^{(k)})$ around at the point $\bs{\theta}$. In some cases, the error in the distributional approximation is too big to ensure validity of the method. This situation occurs for studies with small sample sizes where there is high variability in $\hat{\bs{\theta}}^{(k)}$ so that the estimate is unlikely to be close to the tangent line. In some situations this issue cannot always be resolved by increasing the sample size. This occurs when the higher order terms do not converge faster to zero than the first order term. To demonstrate this, we consider the example function $\Delta(\bs{\theta})$  in Equation~\eqref{eq:example}. For this function, the Taylor expansion is given by
\begin{equation*}
\hat{\theta}_1^{(k)}\hat{\theta}_2^{(k)}
=
(\theta_1\theta_2) \;+ \;
(\hat{\theta}_1^{(k)}\theta_2 + \theta_1\hat{\theta}_2^{(k)} -2\theta_1\theta_2)  \;+ \;
(\hat{\theta}_1^{(k)}\hat{\theta}_2^{(k)}-\hat{\theta}_1^{(k)}\theta_2-\theta_1\hat{\theta}_2^{(k)}+\theta_1\theta_2)
\end{equation*}
and higher order terms are equal to zero. The first and second order elements of the Taylor expansion contain terms of the form $\hat{\theta}_{j_1}^{(k)}\theta_{j_2}$ which have variance of the same magnitude. Hence, it is detrimental to make the assumption that only the first order term is needed but not the second. In the supplementary materials, we provide evidence that the estimates obtained from the Delta method are not normally distributed for this global summary statistic. The problem is particularly relevant in the tails and due to the nature of trial design, this may lead to poor control of error rates.

\subsection{Monte Carlo}
\label{subsec:monte_carlo}
The final method for calculating probability functions and boundary constants is the simplest and easiest to implement. This is a Monte Carlo simulation based technique (see \citep{kroese2014monte}) where we repeatedly sample the sequence $\hat{\bs{\theta}}^{(1)},\dots,\hat{\bs{\theta}}^{(K)}$ from the MCJD and estimate probabilities as the proportion of samples which satisfy the desired constraints. Further, we can calculate quantiles and boundary constants in the usual manner for non-parametric statistics. Details for implementing this standard method are provided in the supplementary materials. In Section~\ref{sec:results} we assess the results as we vary the number of samples, denoted by $N$, and we note that there will be a total of $2N$ replicates since we simulate under $H_0$ and $H_A$. One feature worth noting is that the error in this method increases with analysis since the number of replicates in the continuation region reduces as the trial progresses. As the error propagates throughout the trial, this causes worsening estimation of the boundary constants at later analyses. We show this through an example in Section~\ref{sec:results}.
\section{Sample size calculation}
\label{sec:sample_size}
An important aspect of any clinical trial design, whether fixed or group sequential, is the sample size calculation. We have presented theory for calculating boundary constants and probability functions and three methods for the practical implementation of this theory which we shall make use of to accurately determine the sample size in an efficient manner.

Throughout this section, we shall discuss sample size calculation in terms of information levels to facilitate the error spending approach. Such information levels were introduced during the discussion of error spending tests and are given by $\mathcal{I}^{(k)}=|\hat{\Sigma}^{(k)}|^{-1/2}$. For the purpose of sample size calculations, we shall assume that the variance-covariance matrices take the form $\Sigma^{(k)}=M/n^{(k)}$ where $M$ is a known matrix of dimension $(p\times p)$ and $n^{(k)}$ is the cumulative sample size per treatment arm at analysis $k$. This implies the relationship
\begin{equation}
\label{eq:sample_size}
\mathcal{I}^{(k)}=n^{(k)}|M|^{-\frac{1}{2}}.
\end{equation}
Further, if the sample sizes at each analysis are of equal sizes in the MGST, so that $n^{(k)} \propto k,$ we will therefore have $\mathcal{I}^{(k)}/\mathcal{I}^{(K)}=k/K.$
For the remainder of the sample size calculation, we shall utilize these two assumptions. We note that this is equivalent to the necessary assumptions in the sample size calculation for fixed and group sequential tests in one dimension. In which case, the value of a nuisance parameter (usually denoted $\sigma^2$) is specified and we also assume equally spaced information levels. To emphasize this point, the assumptions are only required for the sample size calculation but do not affect the analysis of the collected data. In the supplementary materials, we perform sensitivity analyses which investigate the effects of misspecifiation of the nuisance matrix $M.$

In general, a sample size calculation simply finds the root of a probability equation with constraints determined by significance level and power requirements. Denote $\mathcal{I}_{fix}$ as the required information in the fixed sample trial, this takes the value of $\mathcal{I}^{(1)}$ that satisfies
\begin{equation*}
\begin{split}
\alpha
&= \mathbb{P}_{\bs{\theta}_0}\{\hat{\bs{\theta}}^{(1)}\in \mathcal{R}^{(1)}\}\\
\beta 
&= \mathbb{P}_{\bs{\theta}_A}\{\hat{\bs{\theta}}^{(1)}\in \mathcal{A}^{(1)}\}\\
a^{(1)}&=b^{(1)}.
\end{split}
\end{equation*}
As discussed, there is no closed form solution for these probabilities and boundary constants so we must employ one of the approximations of Section~\ref{sec:methods} during the root finding process. For reasons discussed in Section~\ref{sec:results}, here we use Simpson's numerical integration method with $r=128$.

We now find the maximum information level required for a MGST using the error spending approach. The constraints for this part of the sample size calculation are given by the planned stage-wise type 1 error rates $\psi^{(1)}$ and stage-wise type 2 error rates $\xi^{(k)}$ for $k=1,\dots,K$ which are determined using the error spending functions and Equation~\eqref{eq:error_rates}. The sample size calculation for other choices of boundary configurations is equivalent to fixing the values of $\psi^{(k)}$ and $\xi^{(k)}$. Our aim is to find $\mathcal{I}_{max}$ as the value of $\mathcal{I}^{(K)}$ subject to constraints
\begin{equation*}
\begin{split}
\psi^{(k)} 
&=
\mathbb{P}_{\bs{\theta}_0}\{
\hat{\bs{\theta}}^{(1)}\in \mathcal{C}^{(1)}, \dots,
\hat{\bs{\theta}}^{(k-1)}\in \mathcal{C}^{(k-1)},
\hat{\bs{\theta}}^{(k)}\in \mathcal{R}^{(k)}\}
\hspace{1cm}\text{for }k=1,\dots,K\\
\xi^{(k)} 
&=
\mathbb{P}_{\bs{\theta}_A}\{\hat{\bs{\theta}}^{(1)}\in \mathcal{C}^{(1)}, \dots,
\hat{\bs{\theta}}^{(k-1)}\in \mathcal{C}^{(k-1)},
\hat{\bs{\theta}}^{(k)}\in \mathcal{A}^{(k)}\}
\hspace{1cm}\text{for }k=1,\dots,K\\
a^{(K)}&=b^{(K)}.
\end{split}
\end{equation*}
Again, one of the methods of Section~\ref{sec:methods} must be used to compute the integral and perform the search for $\mathcal{I}_{max}$. The sensible choice would be to use Simpson's numerical integration rule. The challenge here is that in $p$ dimensions and $K$ analyses, the numerical integration method is computationally expensive and hence, repetitively evaluating this integral in a root finding manner is infeasible. Instead, let us suppose that we know the value of the ratio $R=\mathcal{I}_{max}/\mathcal{I}_{fix}$, then we would be able to find $\mathcal{I}_{max}=R\mathcal{I}_{fix}$ without calculating the group sequential probabilities with Simpson's rule. It turns out that by utilizing the Delta method, even though individual estimates for $\mathcal{I}_{fix}$ and $\mathcal{I}_{max}$ are inaccurate, we can find an accurate estimate for the ratio. The intuition here is that the error in the distributional assumption of the Delta method is proportional to sample size, hence in the ratio, the effect is canceled out. The results in Section~\ref{sec:results} back up this finding.  The one dimensional nature of the Delta method allows for computational efficiency and in turn makes calculation of $R=\tilde{\mathcal{I}}_{max}/\tilde{\mathcal{I}}_{fix}$ possible. The tilde notation is here used to denote that the Delta method with $r=128$ is used in the calculation of such values. 

We consider an example with global summary statistic function given by $\Delta(\theta_1,\theta_2)$ in Equation~\eqref{eq:example} and parameter values given by
\begin{equation}
\label{eq:param_vals}
\bs{\theta}_0=(0,0), \bs{\theta}_A = (1.625,1.625), \Sigma^{(k)} =\frac{1}{n^{(k)}} \left(\begin{array}{cc} 40 & 10 \\ 10 & 40 \end{array}\right).
\end{equation}
The fixed sample trial is designed with significance level $\alpha=0.025$ at the point $\bs{\theta}_0$ and power $1-\beta=0.9$ at the point $\bs{\theta}_A$. Using Simpson's rule with $r=128$ we find $\mathcal{I}_{fix}=2.653$ and using the Delta method with $r=128$ we obtain $\tilde{\mathcal{I}}_{fix}=10.274.$ The MGST is designed with a total of $K=5$ analyses and error spending functions $\pi_1(t)=\max\{\alpha t^2,\alpha\}$ and $\pi_2(t)=\max\{\beta t^2,\beta\}$. Using the Delta method with $r=128,$ we find a value of $\tilde{\mathcal{I}}_{max}=11.305$ to satisfy these constraints. Hence, the true maximum required information level is given by $\mathcal{I}_{max}=2.920$. Finally, we can relate information levels and sample size using Equation~\eqref{eq:sample_size}. Hence, the total sample sizes required for the fixed sample trial and MGST respectively are $n_{fix}=102.769$ and $n_{gst}=113.082$. Rounding up so that we have integer values of patients per treatment arm at each stage, we design the fixed sample trial with $n^{(1)}=103$ per treatment arm and the MGST with $n^{(k)}=23k$ per treatment arm. In Section~\ref{sec:results} we show that these sample sizes are accurate resulting in slightly higher power than required due to the discrete nature of sample size.

\section{Comparison of methods}
\label{sec:results}

We have presented three methods to calculate the boundary constants $a^{(k)}$ and $b^{(k)}$ for $k=1,\dots,K$ to construct both fixed sample and group sequential tests. In this section we compare the accuracy of each method for two choices of the global statistic function which are given by $\Delta(\bs{\theta})=\theta_1+\theta_2$ and $\Delta(\bs{\theta})$ in Equation~\eqref{eq:example}. We also consider properties of each method such as the number of gridpoints $r$ for Simpson's rule and the number of simulations $N$ for the Monte Carlo method and hence compare the efficiency of the methods. Further, we separate comparisons between fixed and group sequential tests as the computational burden increases notably when we extend the trial beyond $K=1$ analyses. Throughout this section, we shall assume that the fixed sample trial is designed with significance level $\alpha=0.025$ and power $1-\beta=0.9$. The group sequential test has a total of $K=5$ analyses and we shall assume that the variance matrices $\Sigma^{(1)},\dots,\Sigma^{(K)}$ are known without error so that informative evaluations can be made. Hence, the trial is designed with stage-wise type 1 error rates $\psi^{(1)},\dots,\psi^{(5)}=0.001,0.003,0.005,0.007,0.009$ and stage-wise type 2 error rates $\xi^{(1)},\dots,\xi^{(K)}=0.004,0.012,0.02,0.028,0.036$ which represents more conservative decision making at the earlier analyses.

\begin{table}
  \small
  \begin{center}
  \begin{tabular}{llccclccc}
  &\multicolumn{4}{c}{$\Delta(\bs{\theta})=\theta_1+\theta_2$}&
  \multicolumn{4}{c}{$\Delta(\bs{\theta})=\theta_1\theta_2$}\\
  Method &
  Property &$b^{(1)}$&$\psi^{(1)}$ & $\xi^{(1)}$&
  Property &$b^{(1)}$&$\psi^{(1)}$ & $\xi^{(1)}$\\
  \rule{0pt}{4ex}Simpson & $r=9$&1.9599&0.02501&0.09850&
    $r=9$&0.8234&0.02501&0.09935\\Simpson & $r=10$&1.9599&0.02500&0.09851&
    $r=10$&0.8234&0.02500&0.09936\\Simpson & $r=11$&1.9599&0.02500&0.09851&
    $r=11$&0.8234&0.02500&0.09936\\\rule{0pt}{4ex}Delta & $r=7$&1.9599&0.02500&0.09850&
    $r=9$&3.1382&0.00013&0.64632\\Delta & $r=8$&1.9599&0.02500&0.09851&
    $r=10$&3.1382&0.00013&0.64633\\Delta & $r=9$&1.9599&0.02500&0.09851&
    $r=11$&3.1382&0.00013&0.64633\\\rule{0pt}{4ex}Monte Carlo & $N=10^5$&1.9640&0.02476&0.09923&
    $N=10^5$&0.8245&0.02494&0.09955\\Monte Carlo & $N=10^6$&1.9578&0.02513&0.09814&
    $N=10^6$&0.8274&0.02476&0.10004\\Monte Carlo & $N=10^7$&1.9597&0.02501&0.09848&
    $N=10^7$&0.8232&0.02502&0.09932\\
  \end{tabular}
  \caption{Comparison of methods for constructing a fixed sample trial using the linear function
  $\Delta(\theta_1,\theta_2)=\theta_1+\theta_2$ and non-linear function 
  $\Delta(\theta_1,\theta_2)$ in Equation~\eqref{eq:example}.
  Simulations use true parameter values given in~\eqref{eq:param_vals} and a fixed
  sample size of $n^{(1)}=100$ for the linear case and $n^{(1)}=103$
  for the non-linear case.
  True observed probabilities $\psi^{(1)}$ and $\xi^{(1)}$ were calculated using
  the Delta method with r=128 for the linear case and Simpson's method with
  $r=32$ for the non-linear case.
  \label{tbl:fixed}}
  \end{center}
  \end{table}
Table~\ref{tbl:fixed} shows the performance of each method in a fixed sample trial using the linear function $\Delta(\bs{\theta})=\theta_1+\theta_2.$ For each method, we begin by calculating the boundary constant $b^{(1)}$. Then, we can find \emph{true} observed probabilities $\psi^{(1)}$ and $\xi^{(1)}$ in Equations~\eqref{eq:prob_efficacy} and~\eqref{eq:prob_futility} for this value of $b^{(1)}$. For calculating \emph{true} probabilities, we have used the Delta method with a very high value of $r=128$ for high accuracy. This is appropriate because the function $\Delta(\bs{\theta})$ is linear and so the Delta method approximation is exact in this case. We recognize that theoretical properties of this estimator are known and we have checked our values of $\psi^{(1)}$ and $\xi^{(1)}$ against the function ``pnorm" in R. Following the methods in Section~\ref{sec:sample_size} and the parameters given by~\eqref{eq:param_vals}, the required sample size for the fixed sample trial is $n^{(1)} = 100$ per treatment arm.  We note that $\psi^{(1)}$ is equal to the required significance level exactly however the observed type 2 error is slightly lower than planned due to the discrete value of the sample size. With $r=10$ controlling the number of grid points in each dimension, Simpson's method finds boundary constants that have error rates accurate to 5 decimal places. Unsurprisingly, the Delta Method works well in this scenario, converging to the true values of $\psi^{(1)}$ and $\xi^{(1)}$ at a faster rate than the multivariate Simpson's rule with only $r=8$ required. The Monte Carlo method appears to perform adequately in practice with all values close to the truth. However, due to randomness in this method, increasing the number of simulations $N$ to extreme values still does not guarantee convergence to the truth and some type 1 error rates are inflated which casts doubt on the accuracy of the Monte Carlo method for other functions $\Delta(\cdot).$ The number of unbounded grid points, without truncating for the boundary and including the midpoints, is equal to $(12r-1)^p$ for the multivariate regions method. With $r=16$ and $p=2$ we are required to calculate $36481$ values of the density function which is much more efficient than the Monte Carlo method even with $N=10^5$ which leads to poor accuracy. Results for other significance and power requirements are presented in the supplementary materials which show a similar pattern for all methods. However we see that slightly higher values of $r$ (and hence more grid points) are required for the numerical integration methods when considering values further into the tails of this distribution. For example when $\alpha=0.001$, we need $r=17$ for the multivariate Simpson's method and $r=12$ for the Delta method to reach the same level of accuracy as in Table~\ref{tbl:fixed}.

The next case considered was a fixed sample trial using the function $\Delta(\bs{\theta})$  in Equation~\eqref{eq:example}. With the sample size $n^{(1)}=103$ per treatment arm and the parameter values given in~\eqref{eq:param_vals}, Table~\ref{tbl:fixed} shows the results of this study. For this choice of $\Delta(\cdot)$, we have used Simpson's rule with a very high value of $r=128$ to act as the surrogate true calculation of $\psi^{(1)}$ and $\xi^{(1)}$ at each value of $b^{(1)}.$ We believe this is appropriate because of the accuracy of the method for the linear case where the truth was known. For Simpson's method and Monte Carlo methods, the results display similar patterns to the fixed sample case with the linear choice for $\Delta(\bs{\theta}).$ That is, both methods converge to the true value with the numerical integration method being more efficient in terms of number of required calculations. Compared to the linear case, slightly higher numbers of gridpoints are required for the same accuracy level which is likely due to the shape of the region which approaches an asymptote in the tails of the distribution. The Delta method does not appear to work in this context. These estimates have converged, to a reasonable level of accuracy, to biased values and are therefore inconsistent with the truth. Note that, we have checked that increasing the value of $r$ does not decrease the bias in estimation. Although the error in approximation may be reduced for large sample sizes, the bias in the Delta method and the calculation problems under $H_0$, casts doubt on the validity of the method overall. In Section~\ref{subsec:delta_method}, we showed a theoretical evidence for why the Delta method is not expected to work for functions that are non-linear. The recommendation for fixed sample trials with summary statistics that are non-linear is therefore clear; the multivariate Simpson's rule is the most accurate and efficient choice of method.

\begin{table}
  \small
  \begin{center}
  \begin{tabular}{lllcccccccc}
  &&&\multicolumn{4}{c}{$\Delta(\bs{\theta})=\theta_1+\theta_2$}&
  \multicolumn{4}{c}{$\Delta(\bs{\theta})=\theta_1\theta_2$}\\
  Method & Property & $k$ &
  $a^{(k)}$ & $b^{(k)}$ & $\psi^{(k)}$ & $\xi^{(k)}$&
  $a^{(k)}$ & $b^{(k)}$ & $\psi^{(k)}$ & $\xi^{(k)}$\\
  \rule{0pt}{4ex}Simpson&$r=6$&1&-2.4044&6.5816&0.00101&0.00400&-3.9221&9.8568&0.00100&0.00400\\Simpson&$r=6$&2&-0.0732&4.0915&0.00300&0.01200&-0.8151&3.7549&0.00300&0.01200\\Simpson&$r=6$&3&0.9136&3.0436&0.00500&0.02000&-0.0352&2.0504&0.00500&0.02000\\Simpson&$r=6$&4&1.5027&2.4260&0.00700&0.02800&0.3634&1.2838&0.00700&0.02800\\Simpson&$r=6$&5&1.9554&1.9554&0.00900&0.03492&0.8295&0.8295&0.00900&0.03642\\\rule{0pt}{4ex}Delta&$r=128$&1&-2.4042&6.5884&0.00100&0.00400&-6.3455&10.4708&0.00073&0.00052\\Delta&$r=128$&2&-0.0730&4.0917&0.00300&0.01200&-2.6406&6.5028&0.00016&0.00050\\Delta&$r=128$&3&0.9137&3.0435&0.00500&0.02000&-1.0727&4.8374&0.00006&0.00123\\Delta&$r=128$&4&1.5028&2.4259&0.00700&0.02800&-0.1424&3.8624&0.00004&0.00615\\Delta&$r=128$&5&1.9553&1.9553&0.00900&0.03490&3.2027&3.2027&0.00003&0.60063\\\rule{0pt}{4ex}Monte Carlo&$N=10^7$&1&-2.4106&6.5755&0.00102&0.00396&-3.9243&9.8869&0.00098&0.00399\\Monte Carlo&$N=10^7$&2&-0.1076&4.1550&0.00262&0.01127&-0.8475&3.7809&0.00292&0.01126\\Monte Carlo&$N=10^7$&3&0.7577&3.2954&0.00255&0.01407&-0.1397&2.1487&0.00420&0.01367\\Monte Carlo&$N=10^7$&4&1.1897&2.8880&0.00173&0.01289&0.0970&1.5175&0.00388&0.01092\\Monte Carlo&$N=10^7$&5&2.6720&2.6720&0.00096&0.18574&1.3758&1.3758&0.00153&0.12687\\
  \end{tabular}
  \caption{Comparison of methods for constructing group sequential trials using the linear function
  $\Delta(\theta_1,\theta_2)=\theta_1+\theta_2$ and the non-linear function
  $\Delta(\theta_1,\theta_2)$ in Equation~\eqref{eq:example}.
  Simulations use true parameter values given in~\eqref{eq:param_vals} and a
  sample size of $n^{(k)}=22k$ for the linear case and 
  $n^{(k)}=23k$ for the non-linear case.
  True observed probabilities $\psi^{(k)}$ and $\xi^{(k)}$ were calculated using
  the Delta method with $r=128$ for the linear case and Simpson's Method with $r=16$
  for the non-linear case.
  \label{tbl:gst}}
  \end{center}
  \end{table}
We now consider how the comparison of methods extends to a group sequential trial with a total of $K=5$ analyses. We return to the linear function $\Delta(\bs{\theta})=\theta_1+\theta_2$ so that methods can be checked against the known truth. Following the sample size calculation in Section~\ref{sec:sample_size} and true parameter values given by~\eqref{eq:param_vals}, a total information level of $\mathcal{I}_{max}=2.826$ is required which results in group sizes of $n^{(k)}=22k$ patients per treatment arm. Table~\ref{tbl:gst} shows the estimated boundary constants $a^{(k)}$ and $b^{(k)}$ for $k=1,\dots,K$ constructed under each method. For each set of boundary constants, the probabilities $\psi^{(k)}$ and $\xi^{(k)}$ for $k=1,\dots,K$ are then calculated using the Delta with $r=128$. Similarly to the fixed sample case, this is an appropriate calculation for the \emph{true} probabilities because the function is linear and we have also checked these values against the function ``pmvnorm" from the package \emph{mvtnorm} in R by~\citep{genz2020package}. Both numerical integration methods accurately identify the correct boundary constants and the observed Type 1 error rates are as required. The Monte Carlo method is inaccurate in this case especially for later analyses. This is because we start with a large $N=10^7$ but then for each analysis that passes, the number of observed simulations in the continuation region reduces. Calculation of the boundary constants, via the non-parametric quantile function of Section~\ref{subsec:monte_carlo}, is then based on fewer observations and this method is therefore subject to higher inaccuracy. In fact, across all analyses, the Monte Carlo method has a combined type 1 error rate of 0.009 and type 2 error rate 0.228 which are far from the planned $\alpha=0.025$ and $\beta=0.1.$ For Simpson's rule, a total of $(12r-1)^{2p}$ unbounded univariate nodes are produced for the boundary construction at analysis $k$. With $K=5,p=2$ and $r=6,$ this is approximately $2.5\times 10^7,$ a value comparable to $2N=2\times10^7$ (simulations under $H_0$ and $H_A$), hence these methods take roughly the same amount of time. Similar results for different configurations of the required type 1 and type 2 error rates are found in the supplementary materials. The results follow a similar pattern regardless of error configuration. 

Finally, Table~\ref{tbl:gst} shows the comparison of the methods for a MGST with $K=5$ and non-linear function $\Delta(\bs{\theta})$  in Equation~\eqref{eq:example}. Again, the boundary constants are calculated according to each method and the number of gridpoints in Simpson's method has been chosen to reflect a similar number of calculations (and therefore computation time) as the Monte Carlo method. Evaluation of the true probabilities $\psi^{(k)}$ and $\xi^{(k)}$ given each set of $a^{(k)}$ and $b^{(k)}$ for $k=1,\dots,K$ is more challenging in this setting since the only accurate method for calculation is the multivariate Simpson's rule with a large number of gridpoints. Ideally, we would set $r=128$, but with $p=2$ dimensions and $K=5$ analyses, computation time is in the timescale of weeks. However, we have shown that $r=6$ produces sufficient accuracy for the linear function and we have reason to believe that this extends to the non-linear case. Therefore setting $r=16$ for calculation of the true probabilities is sufficiently large to reliably compare these methods, and reduces computation time to under 4 hours. For this trial, the sample size is given by $n^{(k)}=23k$ as described in Section~\ref{sec:sample_size}. Simpson's rule found the correct boundary constants even with small $r=6$ and the total type 2 error rate was found to be 0.100 which confirms that accuracy of the sample size calculation. The Delta method is unsuccessful at determining boundary constants when the global summary function is not well approximated by a linear function. In this case, the total type 1 and 2 error rates were 0.001 and 0.609 respectively which suggest that this method is redundant. The Monte Carlo method works moderately well in this setting with total type 1 and 2 error rates close to expected. However the error in the boundary constant calculation increases as the trial progresses and could even construct a trial with inflated type 1 error rates because it is subject to randomness. For a completely robust and accurate trial design, the multivariate Simpson's rule is the best option for constructing MGSTs. 
\section{Discussion}
A novel aspect of this work is that only a single hypothesis is tested throughout the trial. The usual framework would consider a hypothesis test for each parameter and then use a multiple testing procedure to control error rates. The Bonferroni method is commonly used for multiplicity adjustment because of its simplicity. For a fixed sample clinical trial and parameter values given by~\eqref{eq:param_vals}, the Bonferroni method with equal weights would reject the null hypothesis on any endpoint such that the p-value is less than $\alpha/2=0.0125.$ With fixed sample size $n^{(1)}=121$ per treatment arm, this has power 0.55 to claim significance for both endpoints. Our method gave power 0.9 using the global summary statistic $\Delta(\bs{\theta})$ in Equation~\eqref{eq:example} and a smaller sample size gave power 0.9 using the linear global statistic. A formal comparison of MGST with multiple testing procedures for each dimension independently could motivate future research.

In one dimension, the final property of the CJD is often referred to as the independent increments property and makes implementation of GSTs straightforward. Further, this property allows the GST methodology to be extended to many other types of adaptive designs including, but not limited to, Seamless Phase II/III trials, adaptive enrichment designs and multi-arm multi-stage trials (see \citep{stallard2011seamless}, \citep{simon2013adaptive}, \citep{wason2012optimal}). Therefore, there are many opportunities to extend the multivariate regions methodology to other adaptive designs and take hold of the benefits that arise from incorporating data on multiple endpoints.

\section*{Software}
All statistical computing and analyses were performed using the software environment R version 4.2.3.
Software relating to the examples in this paper is available at https://github.com/abigailburdon/Decision-regions-for-multivariate-tests.

\section*{Acknowledgements}
This project has received funding from the European Union’s Horizon 2020 research and
innovation programme under grant agreement No 965397. TJ also received funding from the
UK Medical Research Council (MC\_UU\_00002/14). For the purpose of open access, the
author has applied a Creative Commons Attribution (CC BY) licence to any Author Accepted
Manuscript version arising. 

\bibliographystyle{plainnat}  
\bibliography{enrichment}

\end{document}


\maketitle
\section{Details for implementing Simpson's numerical integration rule in multiple dimensions}
We provide details for implementing the methods of Section~3.1 in the main text which is the multivariate version of Simspson's numerical integration rule. In particular we describe the choice of gridpoints and weights. Following Chapter 19 of~\citet{jennison2000group}, we create a set of unbounded gridpoints before truncating these points according to the limits of integration. In our case, we are interested in normally distributed data so it is a sensible choice to concentrate grid points in the center of the distribution and place fewer nodes in the rapidly decreasing tails. In one dimension, the middle two thirds are placed within three standard distributions of the mean, hence for some integer value of $r$, we define elements
$$
\phi_j^{(k)}(i) = \begin{cases}
\theta_j^{(k)}+\sqrt{\Sigma_{jj}^{(k)}}(-3-4\log(r/i) & i=1,\dots,r-1\\
\theta_j^{(k)}+\sqrt{\Sigma_{jj}^{(k)}}(-3+3(i-r)/2r & i=r,\dots,5r\\
\theta_j^{(k)}+\sqrt{\Sigma_{jj}^{(k)}}(-3+4\log(r/i) & i=5r+1,\dots,6r-1\\
\end{cases}
$$
The next step is to add the midpoints of this vector of elements, hence we define the set of unbounded nodes by
$$
x_j^{(k)}(i) = \begin{cases}
\phi_j^{(k)}((i+1)/2) & i=1,3,\dots,6r-1\\
(\phi_j^{(k)}(i-1)+\phi_j^{(k)}(i+1))/2 & i=2,4,\dots,6r-2\\
\end{cases}.
$$
This set of nodes can be used for integration of a function over the interval $(-\infty, \infty)$. For any $k=1,\dots,K$, the integral is unbounded for all $j=1,\dots,p-1.$ However, if the limits of integration are finite in a single dimension, then the set of nodes must be truncated in order to obtain an accurate evaluation of the integral. This is the case for the range $[\bar{y}_s,\bar{y}_{s+1}]$ when at least one of these limits is finite. In such a case, we take the set of nodes $x_p^{(k)}(1),\dots,x_p^{(k)}(6r-1)$, trim off any values outside of 
$[\bar{y}_s,\bar{y}_{s+1}]$  and introduce new points at $\bar{y}_s$ and $\bar{y}_{s+1}$ if necessary. Let $m_j^{(k)}$ denote the number of nodes at analysis $k$ in dimension $j$ and note that $m_j^{(k)}=6r-1$ for $j=1,\dots,p-1$ but $m_p^{(k)}$ is to be determined. Then, applying Simpson's numerical integration rule in one dimension, the weights associated with these nodes are given by
$$
w(i_j^{(k)}) = \begin{dcases}
\tfrac{1}{6}(x_j^{(k)}(3)-x_j^{(k)}(1)) & i_j^{(k)}=1\\
\tfrac{1}{6}(x_j^{(k)}(i_j^{(k)}+2)-x_j^{(k)}(i_j^{(k)}-2) & i_j^{(k)}=3,5,\dots,m_j^{(k)}-2\\
\tfrac{4}{6}(x_j^{(k)}(i_j^{(k)}+1)-x_j^{(k)}(i_j^{(k)}-1) & i_j^{(k)}=2,4,\dots,m_j^{(k)}-1\\
\tfrac{1}{6}(x_j^{(k)}(m_j^{(k)})-x_j^{(k)}(m_j^{(k)}-2) & i_j^{(k)}=m_j^{(k)}
\end{dcases}.
$$

One could consider placing the nodes around the conditional mean given previous dimensions. This would lead to the set of grid points being orthogonal to the covariance matrix. We perform a sensitivity analysis to compute the increase in accuracy of the probability functions when the placement of the points reflects the correlation between the endpoints. For this example we consider the function $\Delta(\bs{\theta})$ in Equation~(1) of the main text and we consider the effects under the null with an extreme correlation for the worst case scenario consideration. Hence, the parameter values we use for the sensititvity analysis are given by
$$
\bs{\theta}_0=(0,0), \Sigma^{(1)}=\left(\begin{array}{cc} 1 & 0.99 \\ 0.99 & 1 \end{array}\right)
$$
Table~\ref{tbl:corr} shows the results of the sensitivity analysis for a range of values of $b^{(1)}$. When adjusting the gridpoints to account for the correlation, we need fewer gridpoints than if we do not adjust (as we have done in the main text). However, note that the un-adjusted version still performs well even for this extreme case where endpoints are correlated with $\rho=0.99$. Adjusting for the correlation adds an extra calculation at every univariate grid point so the computation time roughly doubles. With $p=2$ and $K=5$, this is an extra $(12r-1)^{2p}$ calculations. We find that the extra computational effort required is not worth the gain in accuracy and the un-adjusted version is efficient for moderate correlation levels. 
\begin{table}
  \begin{center}
  \begin{tabular}{l|cc|cc}
  \hline
  & \multicolumn{2}{c}{Un-adjusted} & \multicolumn{2}{c}{Adjusted for correlation} \\
  $b^{(1)}$& Property & $\hat{\psi}^{(1)}$ &
  Property & $\hat{\psi}^{(1)}$\\\hline
  \rule{0pt}{4ex}-1.6292&13&0.89999&11&0.89998\\-1.6292&14&0.90000&12&0.90000\\-1.6292&15&0.90000&13&0.90000\\\rule{0pt}{4ex}-3.8173&14&0.98999&11&0.97499\\-3.8173&15&0.99000&12&0.97500\\-3.8173&16&0.99000&13&0.97500\\\rule{0pt}{4ex}-5.3799&14&0.98999&9&0.99001\\-5.3799&15&0.99000&10&0.99000\\-5.3799&16&0.99000&11&0.99000\\
  \hline
  \end{tabular}
  \caption{Sensitivity analysis comparing probability when adjusting grid points in Simpson's
  multivariate numerical integration rule to be orthogonal to covariance matrix. For this extreme case,
  the correlation between the endpoints is $\rho=0.99$.\label{tbl:corr}}
  \end{center}
  \end{table}
\section{Approximate distribution of sequential global statistics using the Delta Method}
In the main text, the Delta method was used to find the approximate distribution of the global summary statistic. We stated that the global summary statistic $\Delta(\hat{\bs{\theta}}^{(k)})$ is approximately normally distributed. For a fixed sample trial, this is known to hold as proved by~\citet{doob1935limiting}. The extension to proving that the canonical joint distribution holds for the sequence $\Delta(\hat{\bs{\theta}}^{(1)}),\dots,\Delta(\hat{\bs{\theta}}^{(K)})$ follows similar arguments and in turn we find that we only need to prove that the independent increments property holds. The following Theorem shows that the distributional results referred to in Section~3.2 hold.
 \begin{theorem}
Let $\bs{\theta}$ be a $p\times 1$  vector of parameters in a statistical model. Suppose that $\hat{\bs{\theta}}^{(k)}$ is the parameter estimate for $\bs{\theta}$ found at analysis $k$ of a group sequential trial with $K$ analyses and that $\Sigma^{(k)}$ is the variance-covariance matrix for $\hat{\bs{\theta}}^{(k)}$. Let $\Delta(\bs{\theta})$ be a function which returns a scalar output and suppose that the sequence $\hat{\bs{\theta}}^{(1)}, \dots, \hat{\bs{\theta}}^{(K)}$ has the canonical joint distribution (CJD). Then the CJD holds approximately for the sequence of estimates $\Delta(\hat{\bs{\theta}}^{(1)}), \dots, \Delta(\hat{\bs{\theta}}^{(K)})$. That is
\begin{enumerate}
\item $\Delta(\hat{\bs{\theta}}^{(1)}), \dots, \Delta(\hat{\bs{\theta}}^{(K)})$ is multivariate normally distributed
\item $\Delta(\hat{\bs{\theta}}^{(k)}) \sim 
N\left( \Delta(\bs{\theta}), \left[\frac{\partial\Delta(\bs{\theta})}{\partial \bs{\theta}}\right]^T \Sigma^{(k)} \left[\frac{\partial\Delta(\bs{\theta})}{\partial \bs{\theta}}\right]\right)$ for $1\leq k\leq K$
\item $Cov\left(\Delta(\hat{\bs{\theta}}^{(k_1)}),\Delta(\hat{\bs{\theta}}^{(k_2)})\right) = \left[\frac{\partial\Delta(\bs{\theta})}{\partial \bs{\theta}}\right]^T\Sigma^{(k_2)}\left[\frac{\partial\Delta(\bs{\theta})}{\partial \bs{\theta}}\right]$ for $k=1\leq k_1 \leq k_2 \leq K.$
\end{enumerate}
\end{theorem}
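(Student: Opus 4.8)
The plan is to establish all three conclusions at once through a single first-order Taylor (Delta-method) linearisation of $\Delta$ about the true parameter $\bs{\theta}$, and then to exploit the fact that the \emph{same} linear functional is applied at every analysis. Writing $\bs{g} = \partial\Delta(\bs{\theta})/\partial\bs{\theta}$ for the gradient evaluated at the true value, I would begin by expanding
\begin{equation*}
\Delta(\hat{\bs{\theta}}^{(k)}) = \Delta(\bs{\theta}) + \bs{g}^{T}(\hat{\bs{\theta}}^{(k)} - \bs{\theta}) + R^{(k)}, \qquad k=1,\dots,K,
\end{equation*}
where $R^{(k)}$ collects the second- and higher-order terms. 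Consistency of each $\hat{\bs{\theta}}^{(k)}$ for $\bs{\theta}$ renders every $R^{(k)}$ asymptotically negligible relative to the leading stochastic term, so the vector $(\Delta(\hat{\bs{\theta}}^{(1)}),\dots,\Delta(\hat{\bs{\theta}}^{(K)}))$ is, to first order, an affine image of the stacked estimator $(\hat{\bs{\theta}}^{(1)},\dots,\hat{\bs{\theta}}^{(K)})$ under the block map built from copies of $\bs{g}^{T}$ together with a constant shift by $\Delta(\bs{\theta})$.

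Conclusion 1 then follows immediately: an affine transformation of a multivariate normal vector is again multivariate normal, and the stacked vector of estimates is multivariate normal by the CJD assumption. For Conclusion 2 I would take the expectation and variance of the linearised expression, using $E[\hat{\bs{\theta}}^{(k)}]=\bs{\theta}$ to remove the linear term from the mean and $\mathrm{Var}(\hat{\bs{\theta}}^{(k)})=\Sigma^{(k)}$ to obtain $\mathrm{Var}(\Delta(\hat{\bs{\theta}}^{(k)}))\approx \bs{g}^{T}\Sigma^{(k)}\bs{g}$; this reproduces the single-analysis statement of~\citet{doob1935limiting} and yields the stated marginal law.

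The substantive step is Conclusion 3, which is equivalent to the independent-increments property and is the only genuinely new content. Here I would compute, for $k_1\le k_2$,
\begin{equation*}
\mathrm{Cov}\!\left(\Delta(\hat{\bs{\theta}}^{(k_1)}),\Delta(\hat{\bs{\theta}}^{(k_2)})\right) \approx \bs{g}^{T}\,\mathrm{Cov}\!\left(\hat{\bs{\theta}}^{(k_1)},\hat{\bs{\theta}}^{(k_2)}\right)\bs{g} = \bs{g}^{T}\Sigma^{(k_2)}\bs{g},
\end{equation*}
the final equality being exactly the CJD covariance identity $\mathrm{Cov}(\hat{\bs{\theta}}^{(k_1)},\hat{\bs{\theta}}^{(k_2)})=\Sigma^{(k_2)}$. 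Because the gradient $\bs{g}$ is evaluated at the common true value and is therefore identical across analyses, the transformed increments reduce to $\bs{g}^{T}(\hat{\bs{\theta}}^{(k)}-\hat{\bs{\theta}}^{(k-1)})$, so the independent-increments structure of the original sequence transfers unchanged to the transformed one.

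I expect the main obstacle to be the rigorous control of the remainders $R^{(k)}$: one must argue that they vanish \emph{jointly}, not merely marginally at each $k$, so that the limiting joint law is precisely the claimed Gaussian and the cross-covariances are not perturbed. This is eased by two observations. First, the leading coefficient $\bs{g}$ is the same at every analysis, so no additional matching of gradients across $k$ is required. Second, once joint normality is in hand, the uncorrelatedness delivered by the covariance computation is equivalent to the full independence demanded by the independent-increments property, so no separate independence argument is needed beyond the covariance identity above.
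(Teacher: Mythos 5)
Your proposal is correct and follows essentially the same route as the paper: a first-order Taylor linearisation of $\Delta$ about the true $\bs{\theta}$ with the remainder killed by consistency, joint normality obtained by stacking the linearised equations (the paper cites Slutsky and Doob for conclusions 1 and 2 where you argue the affine-image step directly), and conclusion 3 obtained by bilinearity of covariance together with the CJD identity $Cov(\hat{\bs{\theta}}^{(k_1)},\hat{\bs{\theta}}^{(k_2)})=\Sigma^{(k_2)}$. The only cosmetic difference is that the paper writes the expansion in mean-value form with an intermediate point $\bs{\theta}^{*(k)}$ rather than an explicit remainder $R^{(k)}$, which amounts to the same argument.
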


\begin{proof}
The proof of this theorem uses the Taylor expansion of the function $\Delta(\hat{\bs{\theta}})$ around $\bs{\theta}$ which is given by
$$
\Delta(\hat{\bs{\theta}})=
\Delta(\bs{\theta})+ \left[\frac{\partial}{\partial\bs{\theta}}\Delta(\bs{\theta})\bigg\rvert_{\bs{\theta}=\bs{\theta}^*}\right]^T(\hat{\bs{\theta}}-\bs{\theta})
$$
where $\bs{\theta}^*$ lies on the line segment between $\bs{\theta}$ and $\hat{\bs{\theta}}$. For each $k=1,\dots, K,$ we shall apply this Taylor expansion at the point $\hat{\bs{\theta}}=\hat{\bs{\theta}}^{(k)}.$ Therefore, for each $k=1,\dots,K$, we have
$$
\Delta(\hat{\bs{\theta}}^{(k)})=\Delta(\bs{\theta})+ \Delta'(\bs{\theta}^{*(k)})^T(\hat{\bs{\theta}}^{(k)}-\bs{\theta}) $$
where $\bs{\theta}^{*(k)}$ lies on the line segment between $\bs{\theta}$ and $\hat{\bs{\theta}}^{(k)}$ and $\Delta'(\bs{\theta}^{*(k)})=\partial/\partial\bs{\theta} \left(\Delta(\bs{\theta})|_{\bs{\theta}=\bs{\theta}^{*(k)}}\right)$ is shorthand notation. For each $k=1,\dots,K$, the parameter estimate $\hat{\bs{\theta}}^{(k)}$ is consistent for $\bs{\theta}$ and since $\bs{\theta}^{*(k)}$ lies on the line segment between $\bs{\theta}$ and $\hat{\bs{\theta}}^{(k)}$, the difference between $\bs{\theta}^{*(k)}$ and $\bs{\theta}$ is asymptotically negligible. Therefore for each $k=1,\dots,K$ we have approximately
\begin{equation}
\label{eq:taylor_delta1}
\Delta(\hat{\bs{\theta}}^{(k)})=\Delta(\bs{\theta})+ \Delta'(\bs{\theta})^T(\hat{\bs{\theta}}^{(k)}-\bs{\theta}).
\end{equation}

The proof of conditions 1 and 2 are standard results given by~\citet{doob1935limiting} who present the multivariate version of the Delta method. The main idea follows by stacking Equations~\eqref{eq:taylor_delta1} for each $k=1,\dots,K.$ Then, Slutsky's Theorem can be applied to the vector of stacked equations. It remains to prove property 3. Using the approximation in Equation~\eqref{eq:taylor_delta1}, the covariance is given by
\begin{align*}
&Cov\left(\Delta(\hat{\bs{\theta}}^{(k_1)}),\Delta(\hat{\bs{\theta}}^{(k_2)})\right) \\
=&Cov\left(\Delta(\bs{\theta})+ \Delta'(\bs{\theta})^T(\hat{\bs{\theta}}^{(k_1)}-\bs{\theta}) , \Delta(\bs{\theta})+ \Delta'(\bs{\theta})^T(\hat{\bs{\theta}}^{(k_2)}-\bs{\theta})\right) \\
=&\Delta'(\hat{\bs{\theta}}^{(k_1)})^TCov\left(\hat{\bs{\theta}}^{(k_1)},\hat{\bs{\theta}}^{(k_2)}\right)\Delta'(\hat{\bs{\theta}}^{(k_2)}).
\end{align*}

By property 3 of the CJD for the sequence $\hat{\bs{\theta}}^{(1)},\dots,\hat{\bs{\theta}}^{(K)}$, we have that $Cov(\hat{\bs{\theta}}^{(k_1)},\hat{\bs{\theta}}^{(k_2)})=\Sigma^{(k_2)}$ and we see the result
$$
Cov\left(\Delta(\hat{\bs{\theta}}^{(k_1)}),\Delta(\hat{\bs{\theta}}^{(k_2)})\right) =\Delta'(\hat{\bs{\theta}}^{(k_1)})^T\Sigma^{(k_2)}\Delta'(\hat{\bs{\theta}}^{(k_2)}).
$$
\end{proof}

In the main text, we give theoretical arguments that the Delta method gives a poor approximation for a case where the global summary statistic is not linear. We now provide evidence, via simulation, to support this claim. The following example has parameter values
\begin{equation}
\label{eq:param_vals}
\bs{\theta}_0=(0,0), \bs{\theta}_A = (1.625,1.625), \Sigma^{(k)} =\frac{1}{n^{(k)}} \left(\begin{array}{cc} 40 & 10 \\ 10 & 40 \end{array}\right).
\end{equation}
For this example with $\Delta(\bs{\theta})$ given in Equation~(1) of the main text, we note that using the Delta method can result in an estimator with variance equal to zero since $\partial\Delta(\bs{\theta})/\partial\bs{\theta}=\mathbf{0}$ under $H_0$. We can however use the value of $\partial\Delta(\bs{\theta})/\partial\bs{\theta}$ under $H_A$ in place of the equivalent under $H_0.$ Hence, the Delta method applies the approximate marginal distributions $\Delta(\hat{\bs{\theta}}^{(k)})\sim N(0, 264/n^{(k)})$ under $H_0$ and $\Delta(\hat{\bs{\theta}}^{(k)})\sim N(2.64, 264/n^{(k)})$ under $H_A.$ Figure~\ref{fig:norm_delta} shows the Delta Method approximation for this example at the first analysis (or in a fixed sample trial) with $n^{(1)}=122$ patients per treatment arm. For each histogram, $10^5$ observations of $\Delta(\hat{\bs{\theta}}^{(1)})$ are generated. The estimates do not appear to be normally distributed and this is more extreme under $H_0.$ The red lines show the probability density functions of the approximated distributions calculated using the Delta method and confirm our suggestion that the estimates are not normally distributed. The problem is particularly relevant in the tails and due to the nature of trial design, this may lead to poor control of error rates.
\begin{figure}[t]
\centering\includegraphics[width=\textwidth]{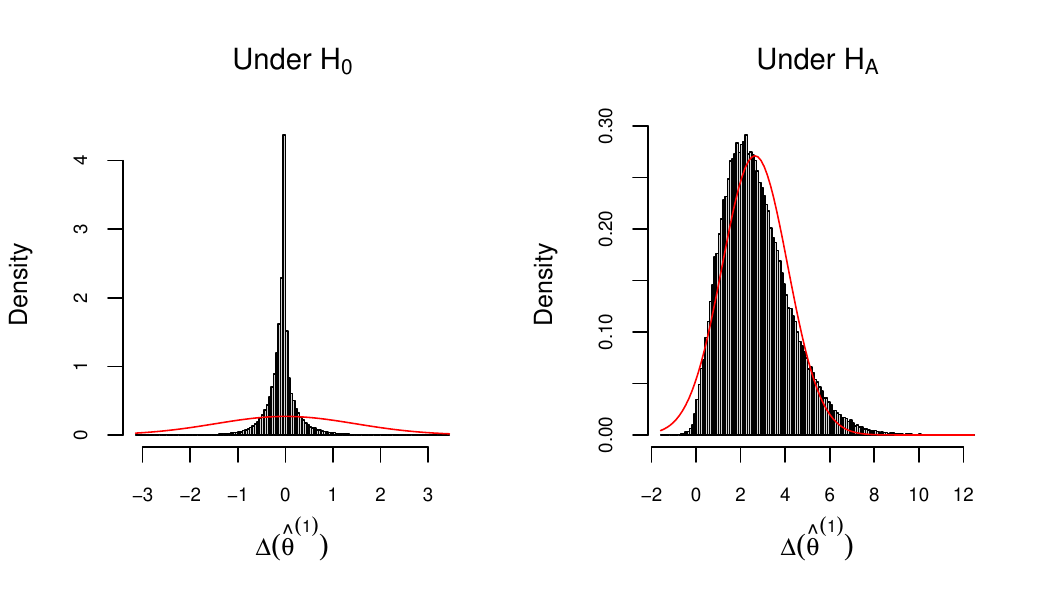}
\caption{Histograms of $\Delta(\hat{\bs{\theta}}^{(1)})$ with $\Delta(\bs{\theta})$ given by Equation~(1) of the main text. Simulated under $H_0$ with $n^{(1)}=122$ and parameter values given by~\eqref{eq:param_vals}. There are $10^5$ replicates per histogram. Red lines show the approximated normal distribution when using the delta method.}
\label{fig:norm_delta}
\end{figure}

\section{Constructing boundary constants using non-parametric statistics.}
We have briefly described a method for estimating probabilities and finding boundary constants using Monte Carlo simulation and we now describe this method in detail. Let $N$ be the number of Monte Carlo samples under the null and under the alternative and suppose that replicates of the sequence of parameter estimates are generated independently from the following distributions
\begin{align*}
\{\hat{\bs{\theta}}_0^{(1)},\dots,\hat{\bs{\theta}}_0^{(K)}\}_i
\sim
MCJD_{\bs{\theta}_0} \\
\{\hat{\bs{\theta}}_A^{(1)},\dots,\hat{\bs{\theta}}_A^{(K)}\}_i
\sim
MCJD_{\bs{\theta}_A}
\end{align*}
for $i=1,\dots, N$ where $MCJD_{\bs{\theta}}$ refers to the multivariate canonical joint distribution with mean true vector $\bs{\theta}$. It should be noted that the subscript notation denoted $i$ here refers to Monte Carlo replicates each of dimension $p$ and should not be confused with the notation $\hat{\theta}^{(k)}_j$ which represents the $j^{th}$ dimension of the vector. Then the stage-wise type 1 and type 2 error rates can be estimated as
\begin{align*}
\psi^{(k)}
\approx
\frac{1}{N}\sum_{i=1}^{N}
\mathbb{1}_{\mathcal{C}^{(1)}}(\{\hat{\bs{\theta}}_0^{(1)}\}_i)\dots
\mathbb{1}_{\mathcal{C}^{(k-1)}}(\{\hat{\bs{\theta}}_0^{(k-1)}\}_i) 
\mathbb{1}_{\mathcal{R}^{(k)}}(\{\hat{\bs{\theta}}_0^{(k)}\}_i)\\
\xi^{(k)}
\approx
\frac{1}{N}\sum_{i=1}^{N}
\mathbb{1}_{\mathcal{C}^{(1)}}(\{\hat{\bs{\theta}}_A^{(1)}\}_i)\dots
\mathbb{1}_{\mathcal{C}^{(k-1)}}(\{\hat{\bs{\theta}}_A^{(k-1)}\}_i)
\mathbb{1}_{\mathcal{A}^{(k)}}(\{\hat{\bs{\theta}}_A^{(k)}\}_i).
\end{align*}

Supposing that 
$$
(\bar{\Delta}^{(k)})_1 < (\bar{\Delta}^{(k)})_2<\dots<(\bar{\Delta}^{(k)})_N
$$
are the ordered samples of $\Delta((\hat{\bs{\theta}}^{(k)})_1),\dots,\Delta((\hat{\bs{\theta}}^{(k)})_N)$ then by the Inverse Transform Theorem, we can show that
$$
\mathbb{P}_{\bs{\theta}}(\Delta(\hat{\bs{\theta}}^{(k)}) \leq \frac{i}{N})
\approx
(\bar{\Delta}^{(k)})_i
\hspace{1cm} \text{and} \hspace{1cm}
\mathbb{P}_{\bs{\theta}}(\Delta(\hat{\bs{\theta}}^{(k)}) > \frac{N-i}{N})
\approx
(\bar{\Delta}^{(k)})_i.
$$
Therefore, we can find boundary constants according to the following scheme:

\hspace{0.5cm} For $k=1$

\hspace{1cm} Sample $(\bs{\hat{\theta}}^{(1)})_i$ under $H_0$ for $i=1,\dots,N$

\hspace{1cm} Set $b^{(1)}$ as the value of $i/N$ such that $(\bar{\Delta}^{(1)})_i$ is closest to $\psi^{(1)}$

\hspace{1cm} Sample $(\bs{\hat{\theta}}^{(1)})_i$ under $H_A$ for $i=1,\dots,N$

\hspace{1cm} Set $a^{(1)}$ as the value of $(N-i)/N$ such that $(\bar{\Delta}^{(k)})_i$ is closest to $\xi^{(1)}$

\hspace{0.5cm} For analyses $k=2,\dots,K$

\hspace{1cm} Keep all $(\bs{\hat{\theta}}^{(k-1)})_i$ in the interval $[a^{(k-1)},b^{(k-1)}]$ which were sampled under $H_0$ and call the length $N_0^{(k)}$

\hspace{1cm} Given $(\bs{\hat{\theta}}^{(k-1)})_i$, sample $(\bs{\hat{\theta}}^{(k)})_i$ under $H_0$ for $i=1,\dots,N_0^{(k)}$

\hspace{1cm} Set $b^{(k)}$ as the value of $i/N_0^{(k)}$ such that $(\bar{\Delta}^{(k)})_i$ is closest to $\psi^{(k)}$

\hspace{1cm} Keep all $(\bs{\hat{\theta}}^{(k-1)})_i$ in the interval $[a^{(k-1)},b^{(k-1)}]$ which were sampled under $H_A$ and call the length $N_A^{(k)}$

\hspace{1cm} Given $(\bs{\hat{\theta}}^{(k-1)})_i$, sample $(\bs{\hat{\theta}}^{(k)})_i$ under $H_A$ for $i=1,\dots,N_A^{(k)}$

\hspace{1cm} Set $a^{(k)}$ as the value of $(N_A^{(k)}-i)/N_A^{(k)}$ such that $(\bar{\Delta}^{(k)})_i$ is closest to $\xi^{(k)}.$
\section{Sensitivity analyses for misspecification of nuisance parameter}
During the sample size calculation, it is necessary to make assumptions about all entries of the matrix $M$, which is the multivariate nuisance parameter. We believe there to be a reasonable amount of knowledge about the structure and values of $M$, but we shall perform sensitivity analyses to observe the worst case scenario. In what follows, the \emph{true} nuisance matrix will be given by
$$ 
M=\left(\begin{array}{cc} 40 & 40\rho \\ 40\rho & 40 \end{array}\right)
$$
where $\rho$ is the correlation parameter and can take values between -1 and 1. For the sensitivity analysis, we shall calculate the sample size required for a MGST with linear global summary statistic given by $\Delta(\bs{\theta})=\theta_1+\theta_2$, significance level $\alpha=0.025$ at $\bs{\theta}_0=(0,0)$, power $1-\beta=0.9$ at $\bs{\theta}_A=(1.625,1.625)$ and error spending boundaries given by $\pi_1(t)=\max\{\alpha t^2, \alpha\}$ and $\pi_2(t)=\max\{\beta t^2, \beta\}$ and we shall do so for various misspecified values of the correlation parameter, denoted $\tilde{\rho}.$ Following the methods described for calculating the sample size, when $\tilde{\rho}$ takes values of $-0.5,-0.25,0,0.025,0.5$ the required sample size, $n^{(k)}$ will be $9k,14k,18k,22k$ and $27k$ respectively. Next, for each of the values of $\tilde{\rho},$ we calculated the obtained power for all possible true correlations.

Figure~\ref{fig:sens_ss} shows the results of this sensitivity analysis. First we see that in general, power increases with misspecified correlation $\tilde{\rho}$ due to the increase in sample size. In line with this finding, power decreases as true correlation $\rho$ increases. The points in Figure~\ref{fig:sens_ss} are for reference, showing power 0.9 at the true value of $\rho.$ In each case, observed power is slightly higher than 0.9 which confirms the accuracy of the sample size calculation accounting for the discrete nature. The cyclic, non-smooth, nature of the graph occurs because changes in information levels result in a different number of analyses. By design, the trial stops at the smallest $k$ such that $\mathcal{I}^{(k)} \geq \mathcal{I}_{max}$. As the magnitude of $\rho$ increases, the values of $\mathcal{I}^{(k)}=40\sqrt{1-\rho^2}$ decrease for $k=1,\dots,K$ and hence, the final analysis changes with the value of $\rho.$ For this particular choice of $\Delta(\bs{\theta})$ and $\bs{\theta}_A,$ we are only worse-off if we have underestimated the correlation, in which case power will be lower than planned. An example is that we plan the trial with $\tilde{\rho}=-0.5$ and observe $\rho=0.5$ which leads to observed power of roughly 0.5. However, we highlight the importance of thorough planning in this case since a belief of $\rho=-0.5$ does not align with the choice of null region $\mathcal{N}=\{\bs{\theta};\theta_1 \leq 0 ,\theta_2 \leq 0\}$ and point alternative $\bs{\theta}_A=(1.625,1.625)$. The recommendation is therefore to ensure that assumptions truly reflect beliefs about the endpoints and also to calculate the required sample size for a range of nuisance matrices, overestimating the magnitude of correlation if we wish to be conservative.

\begin{figure}[t]
\centering\includegraphics[width=\textwidth]{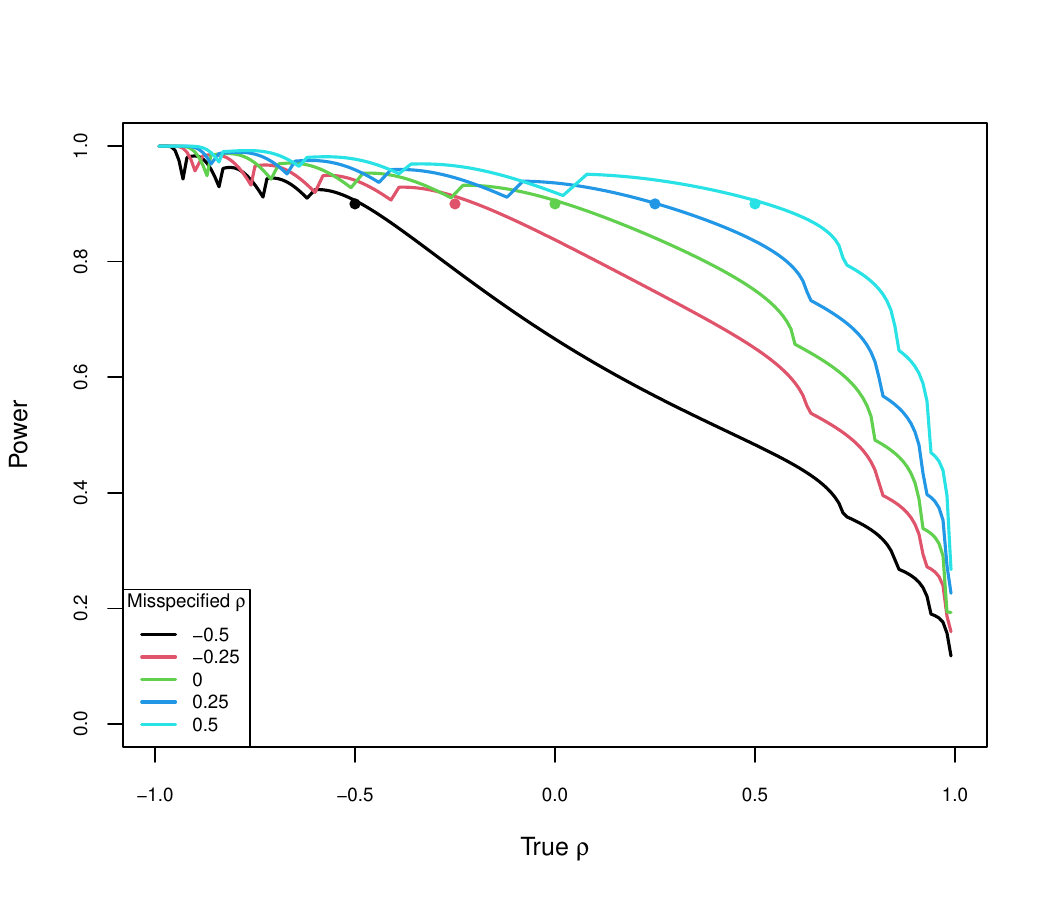}
\caption{Sensitivity analysis for misspecifying the matrix $M$ at the design stage of the trial showing the effects on obtained power. The points show the obtained power compared with planned $1-\beta=0.9$ when the misspecified correlation equals the true correlation.}
\label{fig:sens_ss}
\end{figure}
\section{Further simulation results}
Tables~\ref{tbl:fixed} and~\ref{tbl:gst} provide the results of additional simulation studies. These results are for the readers reference and the main features are discussed in the main text.

\begin{table}
  \begin{center}
  \begin{tabular}{l|lccc|lccc}
  \hline
  &\multicolumn{4}{c}{$\Delta(\bs{\theta})=\theta_1+\theta_2$}&
  \multicolumn{4}{c}{$\Delta(\bs{\theta})=\theta_1\theta_2$}\\
  \cmidrule{2-5}\cmidrule{6-9}
  Method &
  Property &$b^{(1)}$&$\psi^{(1)}$ & $\xi^{(1)}$&
  Property &$b^{(1)}$&$\psi^{(1)}$ & $\xi^{(1)}$\\\hline
  \rule{0pt}{4ex}Simpson & $r=9$&2.3032&0.05000&0.24947&
    $r=9$&1.0680&0.05001&0.24994\\Simpson & $r=10$&2.3032&0.05000&0.24948&
    $r=10$&1.0681&0.05000&0.24995\\Simpson & $r=11$&2.3032&0.05000&0.24948&
    $r=11$&1.0681&0.05000&0.24995\\\rule{0pt}{4ex}Delta & $r=7$&2.3032&0.05001&0.24947&
    $r=6$&3.6713&0.00205&0.69863\\Delta & $r=8$&2.3032&0.05000&0.24948&
    $r=7$&3.6714&0.00205&0.69865\\Delta & $r=9$&2.3032&0.05000&0.24948&
    $r=8$&3.6714&0.00205&0.69865\\\rule{0pt}{4ex}Monte Carlo & $N=10^5$&2.3117&0.04938&0.25141&
    $N=10^5$&1.0633&0.05032&0.24902\\Monte Carlo & $N=10^6$&2.3071&0.04972&0.25036&
    $N=10^6$&1.0662&0.05013&0.24959\\Monte Carlo & $N=10^7$&2.3021&0.05009&0.24921&
    $N=10^7$&1.0687&0.04996&0.25008\\\rule{0pt}{4ex}Simpson & $r=16$&1.8533&0.00100&0.00993&
    $r=14$&0.8027&0.00100&0.00981\\Simpson & $r=17$&1.8533&0.00100&0.00994&
    $r=15$&0.8028&0.00100&0.00981\\Simpson & $r=18$&1.8533&0.00100&0.00994&
    $r=16$&0.8028&0.00100&0.00981\\\rule{0pt}{4ex}Delta & $r=11$&1.8533&0.00100&0.00993&
    $r=14$&2.9850&0.00000&0.65372\\Delta & $r=12$&1.8533&0.00100&0.00994&
    $r=15$&2.9850&0.00000&0.65373\\Delta & $r=13$&1.8533&0.00100&0.00994&
    $r=16$&2.9850&0.00000&0.65373\\\rule{0pt}{4ex}Monte Carlo & $N=10^5$&1.8530&0.00100&0.00992&
    $N=10^5$&0.7833&0.00113&0.00894\\Monte Carlo & $N=10^6$&1.8436&0.00106&0.00951&
    $N=10^6$&0.8025&0.00100&0.00980\\Monte Carlo & $N=10^7$&1.8555&0.00099&0.01003&
    $N=10^7$&0.8066&0.00098&0.00999\\
  \hline
  \end{tabular}
  \caption{Comparison of methods for constructing a fixed sample trial using the linear function
  $\Delta(\theta_1,\theta_2)=\theta_1+\theta_2$ and non-linear function 
  $\Delta(\theta_1,\theta_2)$ given in Equation~(1) of the main text.
  True observed probabilities $\psi^{(1)}$ and $\xi^{(1)}$ were calculated using
  the Delta method with r=128 for the linear case and Simpson's method with
  $r=32$ for the non-linear case.
  \label{tbl:fixed}}
  \end{center}
  \end{table}
\begin{table}
  \begin{center}
  \begin{tabular}{lll|cccc|cccc}
  \hline
  &&&\multicolumn{4}{c}{$\Delta(\bs{\theta})=\theta_1+\theta_2$}&
  \multicolumn{4}{c}{$\Delta(\bs{\theta})=\theta_1\theta_2$}\\
  \cmidrule{4-7}\cmidrule{8-11}
  Method & Property & $k$ &
  $a^{(k)}$ & $b^{(k)}$ & $\psi^{(k)}$ & $\xi^{(k)}$&
  $a^{(k)}$ & $b^{(k)}$ & $\psi^{(k)}$ & $\xi^{(k)}$\\\hline
  \rule{0pt}{4ex}Simpson&$r=6$&1&-0.8576&5.1503&0.00501&0.02000&-1.7454&5.9367&0.00500&0.02000\\Simpson&$r=6$&2&0.5430&3.5239&0.00500&0.02000&-0.3206&2.7653&0.00500&0.02000\\Simpson&$r=6$&3&1.1873&2.7791&0.00500&0.02000&0.0953&1.7089&0.00500&0.02000\\Simpson&$r=6$&4&1.5894&2.3155&0.00500&0.02000&0.4442&1.1846&0.00500&0.02000\\Simpson&$r=6$&5&1.9241&1.9241&0.00500&0.01895&0.8337&0.8337&0.00500&0.02247\\\rule{0pt}{4ex}Delta&$r=128$&1&-0.8575&5.1517&0.00500&0.02000&-3.9044&8.2089&0.00134&0.00235\\Delta&$r=128$&2&0.5431&3.5239&0.00500&0.02000&-1.6729&5.6156&0.00017&0.00132\\Delta&$r=128$&3&1.1873&2.7790&0.00500&0.02000&-0.6513&4.4355&0.00004&0.00197\\Delta&$r=128$&4&1.5895&2.3154&0.00500&0.02000&-0.0351&3.7258&0.00002&0.00566\\Delta&$r=128$&5&1.9239&1.9239&0.00500&0.01893&3.2391&3.2391&0.00001&0.58468\\\rule{0pt}{4ex}Monte Carlo&$N=10^7$&1&-0.8741&5.1409&0.00508&0.01960&-1.7450&5.9327&0.00501&0.02001\\Monte Carlo&$N=10^7$&2&0.5522&3.5277&0.00495&0.02045&-0.4068&2.8606&0.00441&0.01609\\Monte Carlo&$N=10^7$&3&1.1653&2.7705&0.00514&0.01881&-0.0448&1.8957&0.00343&0.01076\\Monte Carlo&$N=10^7$&4&1.5824&2.3045&0.00516&0.01999&0.1098&1.5332&0.00188&0.00601\\Monte Carlo&$N=10^7$&5&1.9418&1.9418&0.00465&0.02019&1.4269&1.4269&0.00069&0.10868\\\rule{0pt}{4ex}Simpson&$r=6$&1&-0.8576&5.1503&0.00501&0.02000&-1.7454&5.9367&0.00500&0.02000\\Simpson&$r=6$&2&0.5430&3.5239&0.00500&0.02000&-0.3206&2.7653&0.00500&0.02000\\Simpson&$r=6$&3&1.1873&2.7791&0.00500&0.02000&0.0953&1.7089&0.00500&0.02000\\Simpson&$r=6$&4&1.5894&2.3155&0.00500&0.02000&0.4442&1.1846&0.00500&0.02000\\Simpson&$r=6$&5&1.9241&1.9241&0.00500&0.01895&0.8337&0.8337&0.00500&0.02247\\\rule{0pt}{4ex}Delta&$r=128$&1&-0.8575&5.1517&0.00500&0.02000&-3.9044&8.2089&0.00134&0.00235\\Delta&$r=128$&2&0.5431&3.5239&0.00500&0.02000&-1.6729&5.6156&0.00017&0.00132\\Delta&$r=128$&3&1.1873&2.7790&0.00500&0.02000&-0.6513&4.4355&0.00004&0.00197\\Delta&$r=128$&4&1.5895&2.3154&0.00500&0.02000&-0.0351&3.7258&0.00002&0.00566\\Delta&$r=128$&5&1.9239&1.9239&0.00500&0.01893&3.2391&3.2391&0.00001&0.58468\\\rule{0pt}{4ex}Monte Carlo&$N=10^7$&1&-0.8741&5.1409&0.00508&0.01960&-1.7450&5.9327&0.00501&0.02001\\Monte Carlo&$N=10^7$&2&0.5522&3.5277&0.00495&0.02045&-0.4068&2.8606&0.00441&0.01609\\Monte Carlo&$N=10^7$&3&1.1653&2.7705&0.00514&0.01881&-0.0448&1.8957&0.00343&0.01076\\Monte Carlo&$N=10^7$&4&1.5824&2.3045&0.00516&0.01999&0.1098&1.5332&0.00188&0.00601\\Monte Carlo&$N=10^7$&5&1.9418&1.9418&0.00465&0.02019&1.4269&1.4269&0.00069&0.10868\\
  \hline
  \end{tabular}
  \caption{Comparison of methods for constructing group sequential trials using the linear function
  $\Delta(\theta_1,\theta_2)=\theta_1+\theta_2$ and the non-linear function
  $\Delta(\theta_1,\theta_2)$ given by Equation~(1) if the main text.
  True observed probabilities $\psi^{(k)}$ and $\xi^{(k)}$ were calculated using
  the Delta method with $r=128$ for the linear case and Simpson's Method with $r=16$
  for the non-linear case.
  \label{tbl:gst}}
  \end{center}
  \end{table}

\bibliographystyle{plainnat}  
\bibliography{enrichment}